\newtheorem{lemma}{Lemma}
\newtheorem{theorem}{Theorem}
\newtheorem{corollary}[lemma]{Corollary}
\newtheorem{proposition}{Proposition}
\newenvironment{proof}[1][Proof]{\begin{trivlist}
\item[\hskip \labelsep {\bfseries #1}]}{\end{trivlist}}
\newenvironment{definition}[1][Definition]{\begin{trivlist}
\item[\hskip \labelsep {\bfseries #1}]}{\end{trivlist}}
\newenvironment{remark}[1][Remark]{\begin{trivlist}
\item[\hskip \labelsep {\bfseries #1}]}{\end{trivlist}}
\newcommand{\qed}{{\unskip\nobreak\hfil\penalty50\hskip2em\vadjust{}
           \nobreak\hfil$\Box$\parfillskip=0pt\finalhyphendemerits=0\par}}
\title{\LARGE\bf Improving Imprecise Compressive Sensing Models}
\begin{document}

% author names and affiliations
% use a multiple column layout for up to two different
% affiliations
\author{\large \sl Dongeun Lee, Rafael Lima, and Jaesik Choi\\[10pt]
\small School of Electrical and Computer Engineering\\[-0pt]
\small Ulsan National Institute of Science and Technology (UNIST), Ulsan 44919, Korea\\[-0pt]
\small eundong@unist.ac.kr, rafael\_glima@unist.ac.kr, jaesik@unist.ac.kr}
\date{}

% make the title area
\maketitle

\begin{abstract}
Random sampling in compressive sensing (CS) enables the compression of large amounts of input signals in an efficient manner, which is useful for many applications. CS reconstructs the compressed signals exactly with overwhelming probability when incoming data can be sparsely represented with a few components. However, the theory of CS framework including random sampling has been focused on exact recovery of signal; impreciseness in signal recovery has been neglected. This can be problematic when there is uncertainty in the number of sparse components such as signal sparsity in dynamic systems that can change over time. We present a new theoretical framework that handles uncertainty in signal recovery from the perspective of recovery success and quality. We show that the signal recovery success in our model is more accurate than the success probability analysis in the CS framework. Our model is then extended to the case where the success or failure of signal recovery can be relaxed. We represent the number of components included in signal recovery with a right-tailed distribution and focus on recovery quality. Experimental results confirm the accuracy of our model in dynamic systems.
\end{abstract}
\normalsize

{\small \noindent{\bf Keywords:} Compressive sensing, random sampling, dynamic signal sparsity, sparse signal recovery.}

\section{Introduction}
Continuous flows of big data are generated by many sources nowadays. Among these, resource limited devices occupy a significant portion. For these devices, sensing and transmitting massive data are important challenges, as they are concerned with saving resources.

Compressive sensing (CS)~\cite{bajwa2006compressive,ji2007bayesian,Seeger2008ICML,luo2009compressive,hsu2009NIPS,Lopes13,malioutov2013exact} is a well suited choice for resource limited devices because it enables the sensing and compression of massive data without the complexity burden imposed by conventional schemes. Recent advances in CS reduce the complexity burden even further with \emph{random sampling}, by which CS schemes have been successfully applied to broader application areas~\cite{baraniuk2007compressive,foucart2013mathematical,lee2014big}.

CS reconstructs the exact signals from the compressed measurements with overwhelming probability when incoming data can be sparsely represented (i.e., small numbers of components). Therefore, most CS frameworks are built based on the assumption that incoming data with sparse representation can be \emph{exactly} recovered from an enough number of measurements.

However, this assumption does not hold in practice when there is no guarantee of enough measurements for varying signal sparsity. This \emph{uncertainty} occurs especially with many dynamic systems where the numbers of components change over time. The assumption also implies that the reconstruction would fail when input signals have more components (denser) than a predefined threshold. This prevents deriving a tight probabilistic model which exploits the numbers of components and measurements in signal recovery. In this regard, recently introduced dynamic CS frameworks~\cite{sejdinovic2010Allerton,Shahrasbi2011CISS,vaswani2010modified,Ziniel2013TSP,ganguli2010NIPS,Malioutov2010JSTS} provide the way of reducing the number of necessary measurements exploiting temporal correlation between measurements. Nevertheless, a recovery success/quality analysis with uncertainty in signal sparsity has not been provided by existing CS frameworks yet.

This paper presents a new theoretical framework for the random sampling in CS that handles \emph{impreciseness} in signal recovery when the number of measurements lacks for varying signal sparsity. Our framework incorporates the beta distribution to present the signal recovery success more accurately than the success probability analysis in the CS framework. Furthermore, we relax the concept of signal recovery success and present the number of components included in the signal recovery as a varying quantity, for which we propose right-tailed distribution modeling. We believe our new framework will bridge the gap between success and failure of signal recovery in CS frameworks.

\section{Compressive Sensing and Random Sampling} \label{sec:background}
Compressive sensing, or compressed sampling (CS), is an efficient signal processing framework which incorporates signal acquisition and compression simultaneously~\cite{baraniuk2007compressive,candes2008introduction}. If a signal can be represented by only a few (significant) components with or without the help of a sparsifying basis, CS allows it to be efficiently acquired with a number of samples that is far fewer than the signal dimension and of the same order as the number of components.

\subsection{Compressing While Sensing} \label{sec:compressing}
In CS, a signal is projected onto random vectors whose cardinality is far below the dimension of the signal. Consider a signal $\mathbf{x}\in\mathbb{R}^{N}$ is compactly represented with a sparsifying basis $\mathbf{\Psi}$ having just a few components: $\mathbf{x}=\mathbf{\Psi{s}}$, where $\mathbf{s}\in\mathbb{R}^{N}$ is the vector of transformed coefficients with a few significant coefficients. Here, $\mathbf{\Psi}$ could be a basis that makes $\mathbf{x}$ sparse in a transform domain such as the DCT, wavelet transform domains, or even the canonical basis, i.e., the identity matrix $\mathbf{I}$, if $\mathbf{x}$ is sparse itself without the help of a transform.

\begin{definition} \label{def:1}
A signal $\mathbf{x}$ is called $K$-sparse if it is a linear combination of only $K {\ll} N$ basis vectors such that $\sum_{i=1}^{K}s_{n_i}{\boldsymbol{\uppsi}}_{n_i}$, where $\{n_1,{\ldots},n_K\}\subset\{1,{\ldots},N\}$; $s_{n_i}$ is a coefficient in $\mathbf{s}$; and ${\boldsymbol{\uppsi}}_{n_i}$ is a column of $\mathbf{\Psi}$.
\end{definition}

In practice, some signals may not be exactly $K$-sparse. Rather, they can be closely approximated with $K$ basis vectors by ignoring many small coefficients close to zero. This type of signal is called \emph{compressible}~\cite{baraniuk2007compressive,foucart2013mathematical}.

CS projects $\mathbf{x}$ onto a random sensing basis $\mathbf{\Phi}\in\mathbb{R}^{M\times{N}}$ as follows ($M{<}N$):
\begin{equation}
\label{eq_two}
\mathbf{y}=\mathbf{\Phi{x}}=\mathbf{\Phi\Psi{s}},
\end{equation}
where $\mathbf{\Phi}$ should have the \emph{restricted isometry property} (RIP).\footnote{The random sensing basis $\mathbf{\Phi}$ have RIP if $(1-\delta){\|\mathbf{s}\|}^{2}_{2}\leq{\|\mathbf{\Phi\Psi{s}}\|}^{2}_{2}\leq(1+\delta){\|\mathbf{s}\|}^{2}_{2}$ for small $\delta\geq0$, and this condition applies to all $K$-sparse $\mathbf{s}$.} A conventional approach for $\mathbf{\Phi}$ to satisfy RIP is sampling its independent identically distributed (i.i.d.) elements from the Gaussian or other sub-Gaussian distributions whose moment-generating function is bounded by that of the Gaussian (e.g., Rademacher/Bernoulli distribution).

The system shown in (\ref{eq_two}) is underdetermined, as the number of equations $M$ is smaller than the number of variables $N$, i.e., there are infinitely many $\mathbf{x}$'s that satisfy $\mathbf{y}=\mathbf{\Phi{x}}$. Nevertheless, this system can be solved with overwhelming probability exploiting the fact that $\mathbf{s}$ is $K$-sparse. Here $M=O(K\log(N/K))$ in the case of Gaussian and sub-Gaussian sensing matrices~\cite{candes2008introduction}.

\subsection{Random Sampling} \label{sec:random}
Random sampling is a variant of CS which can further reduce the computational complexity to a constant time~\cite{foucart2013mathematical,lee2014big}. The random sampling scheme is based on the fact that it is possible to construct $\mathbf{\Phi}$ in (\ref{eq_two}) from a random selection of rows from the identity matrix $\mathbf{I}$, which is equivalent to the random sampling of coefficients in $\mathbf{x}$.

Note that the sparsifying basis $\mathbf{\Psi}$ should be \emph{incoherent}\footnote{The two bases $\mathbf{\Phi}$ and $\mathbf{\Psi}$ are incoherent when the rows of $\mathbf{\Phi}$ cannot sparsely represent the columns of $\mathbf{\Psi}$ and vice versa.} with $\mathbf{I}$, such as the DCT and wavelet transform bases, for the successful recovery of the original signal~\cite{candes2008introduction,foucart2013mathematical}. Unless they are incoherent, the measurement vector $\mathbf{y}\in\mathbb{R}^{M}$ in (\ref{eq_two}) would contain zero entries. Here, the number of required measurements $M$ is larger than in the cases of Gaussian and sub-Gaussian matrices, that is, $M=O(K\log N)$.

\subsection{Recovery of Signal} \label{sec:recovery}
A signal recovery algorithm takes measurements $\mathbf{y}$, a random sensing matrix $\mathbf{\Phi}$, and the sparsifying basis $\mathbf{\Psi}$. The sensing matrix $\mathbf{\Phi}$ and sparsifying basis $\mathbf{\Psi}$ are assumed to be known to a decoder. The signal recovery algorithm then recovers $\mathbf{s}$ knowing that $\mathbf{s}$ is sparse. Once we recover $\mathbf{s}$, the original signal $\mathbf{x}$ can be recovered through $\mathbf{x}=\mathbf{\Psi{s}}$. The recovery algorithm reconstructs $\mathbf{s}$ by the following linear program:
\begin{equation}
\label{eq_three}
\arg\!\min{\|\mathbf{\tilde{s}}\|}_{1}\qquad\textrm{subject to}\qquad\mathbf{\Phi\Psi\tilde{s}}=\mathbf{y}.
\end{equation}
The optimization problem in (\ref{eq_three}) is solved by a $\ell_{1}$-minimization method (basis pursuit)~\cite{boyd2004convex}, greedy methods such as orthogonal matching pursuit~\cite{pati1993orthogonal}, or thresholding-based methods such as iterative hard thresholding~\cite{blumensath2008iterative}. Choosing a specific algorithm depends on $\mathbf{\Phi}$, $M$, $N$, and $K$: recovery success rates and speed can only be determined by numerical tests~\cite{foucart2013mathematical}.\footnote{Note that greedy methods are not always fast.} In this paper, we reconstruct signals by the \emph{basis pursuit}.

Specifically in the case of random sampling, the solution $\mathbf{s^{\star}}$ to (\ref{eq_three}) obeys
\begin{equation}
\label{eq_six}
{\|\mathbf{s^{\star}}-\mathbf{s}\|}_{2}\leq C_{1}\cdot{\|\mathbf{s}-{\mathbf{s}}_{K}\|}_{1}
\end{equation}
for some constant $C_{1}>0$, where ${\mathbf{s}}_{K}$ is the vector $\mathbf{s}$ with all but the largest $K$ components set to $0$. When an original signal is exactly $K$-sparse, then $\mathbf{s}={\mathbf{s}}_{K}$ with $M=O(K\log N)$ measurements, which implies that the recovery is exact, i.e., $\mathbf{s^{\star}}=\mathbf{s}$.

\section{A New Perspective on Recovery Success} \label{sec:success}
\newcommand{\Prob}{{\rm I\kern-.3em P}}
The success of signal reconstruction in compressive sensing (CS) is not deterministic. For instance, when we say an exact recovery of a $K$-sparse signal is achievable with overwhelming probability, it implies there is also the chance of recovery not being exact.

Most existing CS literature assumes a sufficient number of measurements $M$ such that an exact recovery is almost always achievable~\cite{candes2008introduction,foucart2013mathematical}, which is based on the assumption that the sparsity $K$ is already known or does not exceed a certain bound. However, the signal sparsity in dynamic systems may change over time and an excessive number of measurements may waste resources such as network bandwidth and storage space. For example, fig.~\ref{fig_one} shows recovery error over time for audio data (a 7 second recording of a trumpet solo)~\cite{ZinielRS12}, where varying signal sparsity incurs different recovery error with a fixed number of measurements over time. Here we cannot simply increase the number of measurements to eliminate error, as it is unreasonable in terms of compression. Therefore, we propose a new theoretical framework for the random sampling of CS and provide a new perspective on signal recovery.

\begin{figure}[!t]
\centering
\includegraphics[width=0.6\columnwidth]{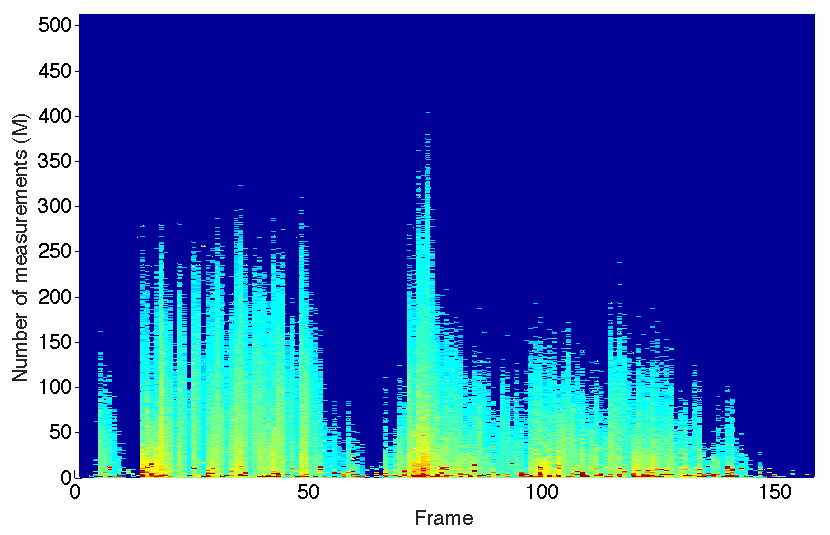}
\caption{Recovery error of audio data for different numbers of measurements $M$ over time. Each frame has a signal length $N=512$. Colors close to blue represent smaller error, whereas colors close to red represent larger error.}
\label{fig_one}
\end{figure}

\subsection{Compressive Sensing Framework} \label{sec:compressive}
In the random sampling of CS, the number of required measurements $M=O(K\log N)$ can be detailed as follows~\cite{foucart2013mathematical}:
\begin{equation}
\label{eq_eight}
M\geq C\cdot K\ln(N)\ln(\epsilon^{-1})
\end{equation}
for some constant $C>0$, where $\epsilon\in(0,1)$ denotes the probability of an \emph{inexact} recovery of the $K$-sparse signal. In particular, the signal recovery succeeds with a probability of at least $1-\epsilon$ if (\ref{eq_eight}) holds.\footnote{See Theorem 12.20~\cite{foucart2013mathematical}.}

We can then express (\ref{eq_eight}) with regard to the probability of failure $\epsilon$, which is given by
\begin{equation}
\label{eq_nine}
\Prob(\mathbf{s^{\star}}\neq\mathbf{s}\mid M,N,K):=\epsilon\leq\exp\left(-\frac{M}{C\cdot\ln(N)K}\right).
\end{equation}
Thus, the probability of failure (inexact recovery) $\Prob(\mathbf{s^{\star}}\neq\mathbf{s}\mid M,N,K)$ is conditional upon $M$, $N$, and $K$. Since we are interested in the dynamic signal sparsity $K$, we model $K$ as a \emph{random variable} with $M$ and $N$ as \emph{fixed quantities}.

If we denote an arbitrary probability density function (pdf) of $K$ as $f_{K}(k)$, we can marginalize over $k$ and find the upper bound of failure probability as follows:
\newcommand{\ud}{\mathrm{d}}
\begin{equation}
\label{eq_ten}
\Prob(\mathbf{s^{\star}}\neq\mathbf{s}\mid M,N)=\int_{k}\Prob(\mathbf{s^{\star}}\neq\mathbf{s}\mid M,N,K)\cdot f_{K}(k)\,\ud k\leq\int_{k}\exp\left(-\frac{M}{C\cdot\ln(N)K}\right)f_{K}(k)\,\ud k.
\end{equation}
Therefore, we can state that a signal recovery succeeds with a probability of at least $1-\int_{k}\exp(-M/(C\cdot\ln(N)K))f_{K}(k)\ud k$, given the distribution of signal sparsity $f_{K}(k)$.

Depending on the form of $f_{K}(k)$, the upper bound in (\ref{eq_ten}) may have an analytic solution. In particular, this is the case when $K$ follows certain distributions such as the \emph{inverse Gaussian distribution} and the \emph{gamma distribution}.\footnote{Since $K\geq0$, probability distributions supported on semi-infinite intervals, i.e., $(0,\infty)$, are rational choices.}

\begin{remark}
Assuming $f_{K}(k)=\mathrm{IG}(\mu,\lambda)$, the upper bound of (\ref{eq_ten}) is
\begin{equation}
\label{eq_inverse}
\frac{\sqrt{\lambda}\exp(\lambda/\mu-\sqrt{2\lambda/\mu^{2}}\sqrt{M/(C\cdot\ln(N))+\lambda/2})}{\sqrt{2M/(C\cdot\ln(N))+\lambda}},
\end{equation}
where $\mu$ and $\lambda$ are the mean and the shape parameter of the inverse Gaussian distribution, respectively.
\end{remark}

\begin{remark}
Assuming $f_{K}(k)=\mathrm{Gamma}(\kappa,\theta)$, the upper bound of (\ref{eq_ten}) is
\begin{equation}
\label{eq_gamma}
\frac{2}{\Gamma(\kappa)}\left(\frac{M}{C\cdot\ln(N)\theta}\right)^{\kappa/2}K_{-\kappa}\left(2\sqrt{\frac{M}{C\cdot\ln(N)\theta}}\right),
\end{equation}
where $\kappa$ and $\theta$ are the shape parameter and the scale parameter of the gamma distribution, respectively; $\Gamma(\cdot)$ is the gamma function; $K_{-\kappa}(\cdot)$ is the modified Bessel funtion of the second kind.
\end{remark}

\subsection{Modeling Success and Failure} \label{sec:beta}
Unfortunately, the probability of signal recovery failure $\epsilon$ given in (\ref{eq_nine}) does not hold in practice because there is a discrepancy between the failure probabilities in the CS framework and actual random sampling, as will be further explained in Section~\ref{sec:experimentalrecovery}. Thus we have to model the success or failure probability of signal recovery from a new perspective.

We can model the new pdf of \emph{signal recovery success} using the mixture of the \emph{Dirac delta function} and the \emph{beta distribution}, which incorporates both stochastic and deterministic cases. We introduce $K_{\mathrm{min}}$ and $K_{\mathrm{max}}$ to denote the minimum and the maximum signal sparsities which yield stochastic probability, as opposed to a deterministic result where signal recovery always succeeds or always fails.

\begin{definition} \label{def:2}
Let $\Prob(\mathbf{s^{\star}}{=}\mathbf{s}{\mid} M{,}N):=\Pi$. The pdf of $\Pi$ given $K$ is given by\footnote{$\mathrm{Beta}(\alpha_{K},\beta_{K})$ here is used to denote the pdf of the beta distribution.}
\begin{equation}
\label{eq_def1}
f_{\Pi{\mid} K}(\pi\mid k){:=}\left\{\begin{array}{ll}
\delta(\pi-1) & k<K_{\mathrm{min}}\\
\mathrm{Beta}(\alpha_{K},\beta_{K}) & \;K_{\mathrm{min}}\leq k\leq K_{\mathrm{max}}\\
\delta(\pi) & K_{\mathrm{max}}<k
\end{array} \right.\!\!\! .
\end{equation}
\end{definition}

Combining this definition with an arbitrary pdf $f_{K}(k)$ of the dynamic signal sparsity $K$, we can find the success probability distribution marginalized over $k$ as follows:
\setlength{\arraycolsep}{0.0em}
\begin{eqnarray}
\label{eq_thirteen}
f_{\Pi}(\pi)&{}={}&\int_{k}f_{\Pi\mid K}(\pi\mid k)f_{K}(k)\,\ud k\nonumber\\
&{}={}&\int_{0}^{K_{\mathrm{min}}}\delta(\pi-1)f_{K}(k)\,\ud k+\int_{K_{\mathrm{min}}}^{K_{\mathrm{max}}}\mathrm{Beta}(\alpha_{K},\beta_{K})\cdot f_{K}(k)\,\ud k+\int_{K_{\mathrm{max}}}^{\infty}\delta(\pi)f_{K}(k)\,\ud k\nonumber\\
&{}={}&\delta(\pi-1)F_{K}(K_{\mathrm{min}})+\delta(\pi)(1-F_{K}(K_{\mathrm{max}}))+\int_{K_{\mathrm{min}}}^{K_{\mathrm{max}}}\mathrm{Beta}(\alpha_{K},\beta_{K})\cdot f_{K}(k)\,\ud k,
\end{eqnarray}
where $F_{K}(\cdot)$ is the cumulative distribution function (CDF) of $K$.

The two Dirac delta function terms in (\ref{eq_thirteen}) can be interpreted as probability masses. Since $\int_{K_{\mathrm{min}}}^{K_{\mathrm{max}}}\mathrm{Beta}(\alpha_{K},\beta_{K})\cdot f_{K}(k)\,\ud k$ does not have an analytic solution, we compute the values numerically.

As an illustrative example, suppose that we examine the success probability by generating many different signed spike ($\pm1$) vectors for each signal sparsity and then performing experiments for each signed spike vector.\footnote{Detailed settings are explained in Section~\ref{sec:experimentalrecovery}.} Fig.~\ref{fig_three} shows histograms of success probability for various signal sparsities, where $K_{\mathrm{min}}=20$ and $K_{\mathrm{max}}=30$.

\begin{figure}[!t]
\centering
\includegraphics[width=0.7\columnwidth]{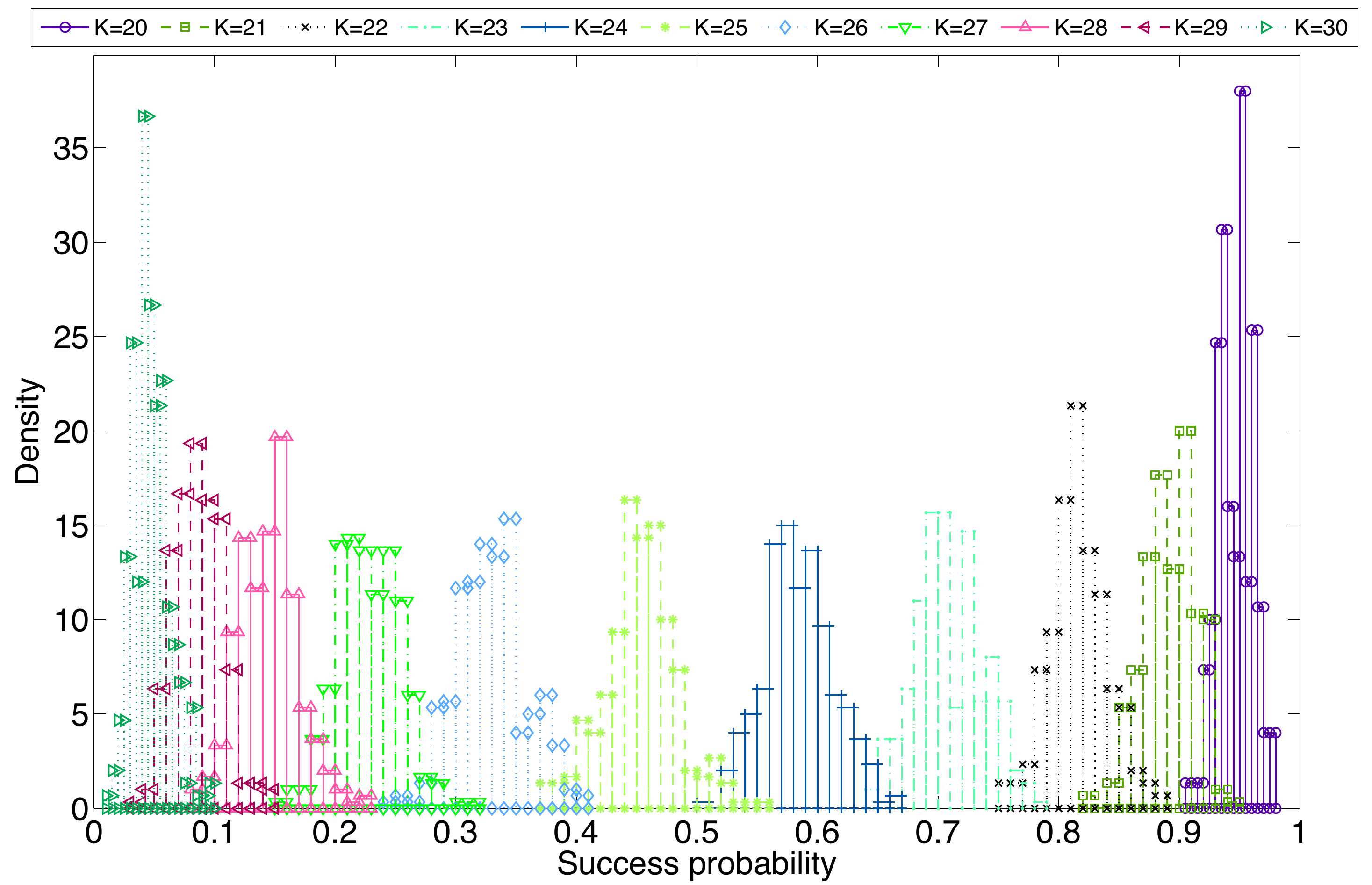}
\caption{Histograms of success probability for various $K$'s. Success probability distribution for each $K$ was obtained with 300 different random signed spike vectors for $N=512$ and $M=100$. A single success probability for each signed spike vector was calculated with 300 experiments.}
\label{fig_three}
\end{figure}

The success probability shown in Fig.~\ref{fig_three} naturally follows the beta distribution with its parameters $\alpha$ and $\beta$ depending on signal sparsity, i.e., $\Prob(\mathbf{s^{\star}}{=}\mathbf{s}\mid M,N,K)\sim\mathrm{Beta}(\alpha_{K},\beta_{K})$. The beta distribution is well known as the conjugate prior for the Bernoulli and the binomial distributions which are ideal for modeling success/failure. If more than 300 experiments had been performed in Fig.~\ref{fig_three}, the variance of each success probability distribution would have been decreased and each distribution would have been more sharply peaked.

\subsection{Modeling Accuracy} \label{sec:modeling}
Here, we present the main theoretical contribution: the recovery success model defined in (\ref{eq_def1}) is tighter than the lower bound of that in the existing CS framework explained in Section~\ref{sec:compressive}, when the number of measurements is not enough. We show the failure probability in the CS framework (\ref{eq_nine}) is incapable of reflecting the actual failure probability of signal recovery. It is not only that the inequality $\Prob(\mathbf{s^{\star}}\neq\mathbf{s}\mid M,N,K)\leq\exp(-M/(C\cdot\ln(N)K))$ cannot provide tight probability of failure, but the inequality itself is inaccurate.

This inaccuracy results from the slowly decaying lower bound of success probability, that is, $1-\exp(-M/(C\cdot\ln(N)K))$. In fact, we can show this lower bound decays slower than a power-law decay by the following lemma.

\begin{lemma}[Slackness of Recovery Success Probability] \label{lem:1}
There exists $K_{0}>0$ such that for all $K>K_{0}$, the lower bound of recovery success probability in the CS framework (Section~\ref{sec:compressive}) is greater than the value of a power-law-decay function.
\end{lemma}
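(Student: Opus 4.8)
The plan is to reduce the statement to an elementary one-variable estimate obtained by holding $M$ and $N$ fixed. Writing $a := M/(C\ln N) > 0$, the recovery-success lower bound implicit in (\ref{eq_nine}) is
\[
g(K) := 1 - \exp\!\left(-\frac{a}{K}\right),
\]
and by a ``power-law-decay function'' I mean $h(K) := c\,K^{-p}$ for constants $c,p>0$. The goal is therefore to exhibit such a pair $(c,p)$ together with a threshold $K_0$ so that $g(K) > h(K)$ for every $K > K_0$.

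First I would pin down the behaviour of $g$ as $K\to\infty$, i.e.\ of $1-e^{-x}$ near $x=0$ with $x=a/K$. The tool is the two-sided elementary bound
\[
x - \frac{x^2}{2} \;\le\; 1 - e^{-x} \;\le\; x \qquad (x \ge 0),
\]
whose lower half follows by noting that $\varphi(x) := 1 - e^{-x} - x + x^2/2$ satisfies $\varphi(0)=\varphi'(0)=0$ and $\varphi''(x) = 1 - e^{-x} \ge 0$, so $\varphi \ge 0$. Substituting $x = a/K$ gives
\[
g(K) \;\ge\; \frac{a}{K} - \frac{a^2}{2K^2} \;=\; \frac{a}{K}\left(1 - \frac{a}{2K}\right),
\]
which already shows that $g$ decays no faster than the exponent-one power law $a/K$; this is the quantitative sense in which the CS bound is \emph{slack}.

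With this lower bound the comparison is routine. For any exponent $p>1$ and any coefficient $c>0$, the right-hand side above behaves like $a/K$ and hence dominates $cK^{-p}$; solving $\tfrac{a}{K}\bigl(1-a/(2K)\bigr) > cK^{-p}$ for $K$ produces an explicit $K_0$ (depending on $a,c,p$) beyond which $g(K)>h(K)$, proving the lemma. The same computation yields the sharper exponent-one statement as well: for any $c<a$ the inequality $\tfrac{a}{K}\bigl(1-a/(2K)\bigr) > c/K$ simplifies to $K > a^2/\bigl(2(a-c)\bigr)$, so one may take $K_0 = a^2/\bigl(2(a-c)\bigr)$ with $h(K)=c/K$.

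I do not expect a genuine analytic obstacle; the argument is a single Taylor estimate. The only real decision — and the point most likely to attract objection — is definitional: what exactly counts as the ``power-law-decay function'' to be beaten. Because $g(K)\sim a/K$ is itself a power law of exponent one, the statement is vacuous unless the target either has a strictly larger exponent ($p>1$) or the same exponent with a strictly smaller coefficient ($c<a$). I would therefore state the lemma with one of these normalizations so that ``decays slower'' is both true and nontrivial, and would flag that the genuinely informative contrast is with the experimentally observed, far steeper near-threshold decay captured by the beta model of Definition~\ref{def:2}.
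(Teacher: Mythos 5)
Your proof is correct, and it reaches the same conclusion as the paper by a genuinely different elementary route. The paper starts from the target inequality (\ref{eq_lem1}), rearranges it and raises both sides to the power $K$ to obtain $(1-K^{-\alpha})^{K}>\exp(-M/(C\cdot\ln(N)))$, and then applies Bernoulli's inequality $(1-K^{-\alpha})^{K}\geq 1-K^{1-\alpha}$; assuming $\alpha>1$, this yields the explicit threshold $K_{0}=(1-\exp(-M/(C\cdot\ln(N))))^{1/(1-\alpha)}$. You instead bound $1-e^{-x}$ below by its second-order Taylor polynomial $x-x^{2}/2$ and substitute $x=a/K$, which makes the asymptotic behaviour $g(K)\sim a/K$ explicit before any comparison is attempted. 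The two arguments land in the same place, but yours buys two things: it is slightly sharper, covering not only every exponent $p>1$ (the paper's case $\alpha>1$) but also the borderline exponent-one power law $cK^{-1}$ with $c<a$; and it surfaces the definitional point the paper leaves implicit --- since the lower bound is itself asymptotically a power law of exponent one, the lemma is only non-vacuous once the competing power-law function is restricted to exponent strictly greater than one (or to a smaller leading coefficient), a restriction the paper introduces silently with the mid-proof clause ``if we assume $\alpha>1$.'' Both proofs are valid; if you adopt yours, you should state that normalization in the lemma itself rather than in the proof.
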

\begin{proof}
We need to show the following inequality
\begin{equation}
\label{eq_lem1}
1-\exp\left(-\frac{M}{C\cdot\ln(N)K}\right)>K^{-\alpha}
\end{equation}
holds if $K>K_{0}$ for some $K_{0}>0$, where $\alpha>0$. Adding, subtracting, and taking the power $K$ on both sides yields
\begin{equation}
\label{eq_lem2}
(1-K^{-\alpha})^K>\exp\left(-\frac{M}{C\cdot\ln(N)}\right).
\end{equation}

We now use the \emph{binomial approximation} on the left-hand side: $(1-K^{-\alpha})^K\geq 1-K\cdot K^{-\alpha}$. Thus we instead prove the following inequality
\begin{equation}
\label{eq_lem3}
1-K^{1-\alpha}>\exp\left(-\frac{M}{C\cdot\ln(N)}\right).
\end{equation}
holds if $K>K_{0}$ for some $K_{0}>0$.

If we assume $\alpha>1$, then adding, subtracting, and taking the power $1/(1-\alpha)$ on both sides of (\ref{eq_lem3}) yields
\begin{equation}
\label{eq_lem4}
\left(1-\exp\left(-\frac{M}{C\cdot\ln(N)}\right)\right)^{1/(1-\alpha)}<K.
\end{equation}
Setting $K_{0}=(1-\exp(-M/(C\cdot\ln(N))))^{1/(1-\alpha)}$, we can argue that for all $K>K_{0}$, the lower bound of recovery success probability is greater than the value of a power-law-decay function.\qed
\end{proof}

\begin{corollary} \label{cor:1}
In the CS framework (Section~\ref{sec:compressive}), there is always a chance of succeeding at signal recovery however large $K$ is.
\end{corollary}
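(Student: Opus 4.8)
The plan is to read the phrase ``there is always a chance of succeeding'' as the precise claim that the lower bound of recovery success probability, namely $1-\exp(-M/(C\cdot\ln(N)K))$, stays strictly positive for every finite $K$, no matter how large. Framed this way, the corollary should follow almost immediately from Lemma~\ref{lem:1}.

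First I would invoke Lemma~\ref{lem:1}, which guarantees that for all $K>K_{0}$ the success-probability lower bound satisfies $1-\exp(-M/(C\cdot\ln(N)K))>K^{-\alpha}$ for some $\alpha>0$. Since $K^{-\alpha}>0$ for every finite $K$, this inequality shows the bound is dominated below by a strictly positive quantity on the entire unbounded range $K>K_{0}$; in particular, although the bound decays as $K$ grows, it decays no faster than a power law and therefore never reaches zero. Next I would dispatch the remaining bounded region $0<K\le K_{0}$ directly: because $M$, $C$, $\ln(N)$, and $K$ are all positive, the exponent $-M/(C\cdot\ln(N)K)$ is strictly negative, so $\exp(-M/(C\cdot\ln(N)K))<1$ and hence $1-\exp(-M/(C\cdot\ln(N)K))>0$. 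Combining the two regions, the lower bound is strictly positive for all $K>0$, which is exactly the assertion that signal recovery always retains a positive chance of success however large $K$ becomes.

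I do not anticipate a genuine obstacle; the content is a direct consequence of the lemma's power-law comparison together with the positivity of the exponent's argument. The only point needing care is the logical framing---making explicit that ``a chance of succeeding'' is formalized as strict positivity of the success-probability lower bound, and that Lemma~\ref{lem:1} precisely rules out this bound vanishing for any finite sparsity $K$.
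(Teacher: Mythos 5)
Your proposal is correct and follows essentially the same route as the paper: both rest on Lemma~\ref{lem:1}'s power-law comparison, noting that $K^{-\alpha}>0$ forces the success lower bound $1-\exp(-M/(C\cdot\ln(N)K))$ to remain strictly positive however large $K$ is. Your extra observation for $0<K\le K_{0}$ (that the exponent is strictly negative, so the bound is positive there too) in fact already establishes positivity for \emph{all} $K$ without the lemma, which only sharpens the point the paper is making --- that the CS bound never vanishes --- but this is a minor tidying rather than a different argument.
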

\begin{proof}
The power-law-decay function $K^{-\alpha}$ in (\ref{eq_lem1}) \emph{slowly} converges to zero as $K\to\infty$: its value is noticeably greater than zero even with large $K$. As the lower bound of recovery success probability is greater than the value of the power-law-decay function for all $K>K_{0}$, we can say there is always a chance of recovery success however large $K$ is.\qed
\end{proof}

We can now show that our recovery success model provides more accurate success probability by the following theorem. 
\begin{theorem} \label{thm:1}
The recovery success model in (\ref{eq_def1}) is tighter than the lower bound of recovery success probability given by the CS framework (Section~\ref{sec:compressive}) with a limited number of measurements.
\end{theorem}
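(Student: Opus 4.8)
The plan is to compare the two models on the same footing --- as functions of the sparsity $K$ --- and to exhibit a regime of $K$ in which they provably disagree, with the new model assigning the smaller (hence tighter) success probability. I would formalize the ``limited number of measurements'' hypothesis by working in the regime $K>K_{\mathrm{max}}$, where $K_{\mathrm{max}}$ is the threshold above which recovery is deterministic in Definition~\ref{def:2}; when $M$ is small this regime is nonempty for realistic sparsities, which is exactly the setting of interest.

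First I would read off the success probability predicted by each model. Under the new model, for $k>K_{\mathrm{max}}$ the conditional pdf is $f_{\Pi\mid K}(\pi\mid k)=\delta(\pi)$, so all probability mass sits at $\pi=0$ and the predicted recovery success probability $\Prob(\mathbf{s^{\star}}=\mathbf{s}\mid M,N,K)$ is exactly $0$. Under the CS framework, equation~(\ref{eq_nine}) yields the success probability lower bound $1-\exp(-M/(C\cdot\ln(N)K))$. The comparison thus reduces to contrasting a deterministic $0$ against this slackly decaying bound.

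Next I would invoke Lemma~\ref{lem:1} together with Corollary~\ref{cor:1}: there exists $K_{0}>0$ such that for all $K>K_{0}$ the CS lower bound strictly exceeds the power-law value $K^{-\alpha}>0$, which itself stays bounded away from zero even as $K\to\infty$. Setting $K_{1}=\max(K_{0},K_{\mathrm{max}})$, for every $K>K_{1}$ the new model predicts success probability $0$ while the CS lower bound remains strictly positive. The new model's value therefore lies strictly below the CS lower bound, which is precisely the sense in which it is tighter: the CS bound overestimates the chance of success exactly where the measurements are insufficient, whereas the new model collapses to deterministic failure and reflects the true behavior.

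I expect the main obstacle to be conceptual rather than computational: pinning down what ``tighter'' should mean when one compares a deterministic point mass against a probabilistic lower bound, and arguing that the limited-measurement hypothesis genuinely forces the disagreement regime $K>K_{\mathrm{max}}$ to be occupied rather than vacuous. A secondary subtlety is the transition band $K_{\mathrm{min}}\le k\le K_{\mathrm{max}}$, where the success probability is $\mathrm{Beta}(\alpha_{K},\beta_{K})$-distributed; I would note that its mean $\alpha_{K}/(\alpha_{K}+\beta_{K})$ decreases toward $0$ as $k\to K_{\mathrm{max}}$, so the tightness established above extends continuously through the band and is not merely an artifact of the hard threshold at $K_{\mathrm{max}}$.
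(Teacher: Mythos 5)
Your argument at the large-$K$ end is essentially the paper's own: the paper likewise observes (via Corollary~\ref{cor:1}) that the CS lower bound remains strictly positive for arbitrarily large $K$ --- even $K>M$, where recovery is impossible --- whereas the model in (\ref{eq_def1}) collapses to a deterministic $0$ beyond $K_{\mathrm{max}}$ (the paper phrases this as letting the beta mean $\alpha_{K}/(\alpha_{K}+\beta_{K})\to 0$ via $\alpha_{K_{\mathrm{max}}}\to 0$). You make this half more explicit than the paper does by actually invoking Lemma~\ref{lem:1} and constructing $K_{1}=\max(K_{0},K_{\mathrm{max}})$, which is a cleaner way to occupy the disagreement regime. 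What you omit is the second half of the paper's proof: the symmetric claim at the \emph{small}-$K$ end. The paper argues that the new model attains success probability $1$ already at $K_{\mathrm{min}}$ strictly bounded away from zero (by letting $\beta_{K_{\mathrm{min}}}\to 0$), whereas the CS lower bound $1-\exp(-M/(C\cdot\ln(N)K))$ tends to $1$ only as $K\to 0$; since $0<K_{\mathrm{min}}<K_{\mathrm{max}}<\infty$, the new model is tighter on both flanks. Your closing remark about the beta mean decreasing through the band gestures at the transition but does not establish this lower-end comparison, so as written your proof shows tightness only for $K$ above $K_{1}$, not the full claim. Given the informal character of the theorem this is a modest omission, but to match the paper's conclusion you should add the $K_{\mathrm{min}}$ argument.
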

\begin{proof}
The claim of the CS framework in Corollary~\ref{cor:1} is in fact implausible because it says we can even set $K>M$ and there is still a chance of success. We cannot expect signal recovery with a number of measurements $M$ less than $K$.

On the contrary, our recovery success model can yield $\Prob(\mathbf{s^{\star}}=\mathbf{s}\mid M,N,K)=0$ with a bounded $K_{\mathrm{max}}$. In particular, we can let the mean of $\mathrm{Beta}(\alpha_{K},\beta_{K})$, $\alpha_{K}/(\alpha_{K}+\beta_{K})$, converge to zero with $\alpha_{K_{\mathrm{max}}}\to 0$.

Similarly, we show this mean converges to one ($\Prob(\mathbf{s^{\star}}=\mathbf{s}\mid M,N,K)=1$) with $K_{\mathrm{min}}$ which is not so close to zero, whereas the lower bound of the recovery success probability given by the CS framework converges to one only if $K$ is very close to zero.

We can let $\alpha_{K}/(\alpha_{K}+\beta_{K})$ converge to one with $\beta_{K_{\mathrm{min}}}\to 0$. In contrast, $1-\exp(-M/(C\cdot\ln(N)K))\to 1$ if, and only if, $K\to 0$. Since $0<K_{\mathrm{min}}<K_{\mathrm{max}}<\infty$, we can argue that our recovery success model can provide tighter recovery success probability.\qed
\end{proof}

\subsection{Parameter Learning in Dynamic Systems} \label{sec:hmm}
When the signal sparsity $K$ changes in dynamic systems, it does not change in an abrupt manner; rather, it tends to smoothly change over time~\cite{vaswani2010modified,Ziniel2013TSP}. One simple way to model this correlation between $K$'s is to utilize the \emph{Markov model}~\cite{Ziniel2013TSP}. Here, each $K$ makes up a state and each state is associated with the recovery success probability. This can be best modeled by the hidden Markov model, where each state $K$ generates success/failure according to the emission probability.

In our scenario, signal recovery success is observed in an environment where the signal sparsity varies over time. We want to estimate parameters of the hidden Markov model, especially the emission probabilities. Since our recovery success model employs the beta distribution as conjugate distributions (prior and posterior), we can learn its parameters $\alpha_{K}$ and $\beta_{K}$ for each state $K$.

Specifically, the decoder can observe signal recovery success/failure and corresponding signal sparsity $K$ at each decoding step. Then using these emission and state sequences, it can sequentially update the parameters $\alpha_{K}$ and $\beta_{K}$ for each state $K$~\cite{durbin1998biological}. In order to prevent over-fitting with insufficient observations, it is preferrable to have hyperparameters set according to $K$'s. In Fig.~\ref{fig_three}, we can clearly see the trend of $\alpha_{K}$ and $\beta_{K}$ for different $K$'s: $\alpha_{K}$ decreases, whereas $\beta_{K}$ increases as $K$ grows. (Also see the proof of Theorem~\ref{thm:1}.)

\section{Further Analysis on Recovery Quality} \label{sec:quality}
When a signal of interest is not exactly $K$-sparse but compressible, as discussed in Section~\ref{sec:compressing}, the signal recovery in Section~\ref{sec:recovery} can be treated from a different perspective~\cite{lee2015learning}. In particular, the inequality (\ref{eq_six}) is considered differently.

If an original signal is compressible, then the quality of a recovered signal is proportional to that of the $K$ most significant pieces of information. We get progressively better results as we compute more measurements $M$, since $M=O(K\log N)$~\cite{candes2008introduction}. Therefore, $\mathbf{\Psi{s^{\star}}}\in\mathbb{R}^{N}$ also makes progress on its quality as $M$ increases.\footnote{The error bound follows (\ref{eq_six}) as well if $\mathbf{\Psi}$ is an orthogonal matrix, which is usually the case.}

From this viewpoint, the success or failure of signal recovery no longer exists. Rather, we can view the number of components included in the signal recovery as a \emph{varying quantity}. Specifically, if a signal recovery is about to fail with a given $K$, then $K$ can be lowered to make the recovery eventually succeed. Here the number of \emph{included} components $K$ varies for different recoveries and signals, as analogous to the success probability in Section~\ref{sec:beta} that can be calculated with different recoveries and varies for different signals.

In this regard, (\ref{eq_six}) can be utilized to infer varying $K$'s over different recoveries and signals. Here our assumption is that the upper bound in (\ref{eq_six}) is \emph{tight} such that we solve the following optimization problem:
\begin{equation}
\label{eq_fourteen}
\max{K}\qquad\textrm{subject to}\qquad{\|\mathbf{s^{\star}}-\mathbf{s}\|}_{2}\leq C_{1}\cdot{\|\mathbf{s}-{\mathbf{s}}_{K}\|}_{1}.
\end{equation}
In (\ref{eq_fourteen}), $C_{1}$ has to be determined, where the maximum signal sparsity $K_{\mathrm{max}}$ introduced in Section~\ref{sec:beta} plays a key role to set the upper limit on how large $K$ can be, since $K>K_{\mathrm{max}}$ is not reasonable.

In particular, we can generate a compressible signal $\mathbf{s}_i\in S$ such that ${\|\mathbf{s}_i\|}_{1}=C_{\ell_{1}}$ and ${\|\mathbf{s}_i\|}_{2}=C_{\ell_{2}}$ for all $i$, where $S$ is the set containing many different signals; $C_{\ell_{1}}>0$ and $C_{\ell_{2}}>0$ being constants. For each $\mathbf{s}_i$, we have a set $S_{i}^{\star}$ which contains many different recoveries $\mathbf{s}_{ij}^{\star}$. Then $C_{1}$ can be found as follows:
\begin{equation}
\label{eq_fifteen}
C_{1}=\frac{\min{\|\mathbf{s}_{ij}^{\star}-\mathbf{s}_i\|}_{2}}{{\|\mathbf{s}_i-\mathbf{s}_{i}^{K_{\mathrm{max}}}\|}_{1}},
\end{equation}
where $\mathbf{s}_{i}^{K_{\mathrm{max}}}$ denotes the compressible signal $\mathbf{s}_i$ with all but the largest $K_{\mathrm{max}}$ components set to $0$.

Varying $K$'s obtained through (\ref{eq_fourteen}) can be represented by a pdf, which has been empirically shown to follow the \emph{gamma distribution}~\cite{lee2015learning}. We are interested in the shape of this pdf, which is shown by the following proposition.

\begin{proposition} \label{pro:1}
The pdf of $K$, the number of components included in the signal recovery of a compressible signal, is skewed to the right, i.e., right tailed.
\end{proposition}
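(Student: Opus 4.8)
The plan is to convert the defining optimization (\ref{eq_fourteen}) into an explicit monotone map from the recovery error to $K$, and then to read the skewness off the shape of that map. First I would set $g(K):=\|\mathbf{s}-\mathbf{s}_K\|_1$, the tail $\ell_1$-mass of the compressible signal once its $K$ largest entries are retained, and write the (assumed tight) constraint from (\ref{eq_six})--(\ref{eq_fourteen}) as the equation $C_1\,g(K)=\|\mathbf{s}^\star-\mathbf{s}\|_2=:\epsilon$. Since $g$ is strictly decreasing in $K$, this yields the closed form $K=g^{-1}(\epsilon/C_1)$, exhibiting $K$ as a deterministic decreasing function of the random recovery error $\epsilon$. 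The distribution of $K$ is then exactly the push-forward of the error law through $g^{-1}$, so the whole question reduces to the geometry of $g$, with the cap $K\le K_{\mathrm{max}}$ and the calibration of $C_1$ in (\ref{eq_fifteen}) keeping the construction well posed.

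The second step is to establish that $g$ is convex as well as decreasing. Ordering the coefficients as $|s_{(1)}|\geq|s_{(2)}|\geq\cdots$, the successive decrements satisfy $g(K-1)-g(K)=|s_{(K)}|$, and because these magnitudes are non-increasing in $K$ the slope of $g$ rises toward zero, which is precisely convexity. A short derivative computation then shows that the inverse of a convex decreasing function is again convex and decreasing, so $\phi(\epsilon):=g^{-1}(\epsilon/C_1)$ is convex and decreasing. Intuitively, since the marginal coefficient $|s_{(K)}|$ shrinks as $K$ grows, a fixed downward fluctuation of $\epsilon$ near its typical value is converted into a progressively larger increase of $K$: the map stretches the large-$K$ (small-error) side and compresses the small-$K$ (large-error) side.

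The third step converts this convexity into right-skewness. I would apply Jensen's inequality to the convex map $\phi$, giving $\mathbb{E}[K]=\mathbb{E}[\phi(\epsilon)]\geq\phi(\mathbb{E}[\epsilon])$, while monotonicity gives $\mathrm{median}(K)=\phi(\mathrm{median}(\epsilon))$. For error fluctuations that are approximately symmetric about their center, $\mathbb{E}[\epsilon]=\mathrm{median}(\epsilon)$, so the two displays combine to $\mathbb{E}[K]\geq\mathrm{median}(K)$: the mean lies to the right of the median, the standard signature of a right-tailed law. To make the tail explicit and reconcile with the empirically observed gamma fit \cite{lee2015learning}, I would specialize to the usual compressible model $|s_{(j)}|\sim j^{-\gamma}$, under which $g(K)\sim A\,K^{-\beta}$ with $\beta=\gamma-1$, hence $K\sim B\,\epsilon^{-1/\beta}$; a change of variables then yields $f_K(k)\sim c\,k^{-(\beta+1)}$ for large $k$, an explicit power-law right tail consistent with the right-skewed gamma shape.

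The step I expect to be the main obstacle is controlling the error distribution so that the conclusion is unconditional rather than heuristic. The clean Jensen argument delivers $\mathbb{E}[K]\geq\mathrm{median}(K)$ only when $\epsilon$ is symmetric about its center; if $\epsilon$ is itself appreciably skewed, the decreasing map $\phi$ can partly offset the convex stretching, and one must instead bound the third central moment of $K=\phi(\epsilon)$ directly, a computation hinging on quantitative estimates of $g'$ and $g''$ (equivalently, on the decay profile $|s_{(j)}|$) near the typical $\epsilon$. Establishing the power-law tail is subtler still, since it presumes arbitrarily small errors occur with positive density, whereas a genuinely compressible signal recovered from $M=O(K\log N)$ measurements generally exhibits an error floor; the honest version therefore argues right-skewness on the bounded range $(0,K_{\mathrm{max}}]$ from convexity alone, and treats the heavy-tail matching as the asymptotic regime in which the floor is negligible.
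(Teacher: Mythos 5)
Your proposal is correct and rests on the same underlying mechanism as the paper's proof, but it packages that mechanism more generally and more formally. The paper fixes a concrete decay profile for the sorted coefficients --- it takes the tail sum $\sum_{n=1}^{N-K}s_n$ to be an arithmetic series (hence quadratic in $K$) or a geometric series --- inverts that specific function to obtain $K\sim K_{\mathrm{max}}-\sqrt{\cdot}$ or $K\sim N-\log(\cdot)$, and then argues verbally that a symmetric error distribution is ``compressed'' on its large side, producing a right tail for $K$. You instead isolate the structural property that makes any such example work: because the sorted magnitudes are non-increasing, the tail-$\ell_1$ function $g$ is convex and decreasing, its inverse is again convex and decreasing, and Jensen's inequality together with preservation of the median under monotone maps gives $\mathbb{E}[K]\geq\mathrm{median}(K)$ whenever the error law is symmetric. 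This subsumes both of the paper's special cases (the quadratic and the exponential tail sums are just two convex decreasing instances of $g$) and replaces the informal ``compression'' argument with a checkable inequality; the price, which you acknowledge, is that mean-exceeds-median is only the Pearson surrogate for right-skewness rather than a statement about the third moment or the tail itself --- but the paper's own argument is no more rigorous on that point, and both proofs lean on the same unproved hypothesis that $\|\mathbf{s^{\star}}-\mathbf{s}\|_2$ is symmetrically distributed. Your closing power-law tail computation and the caveat about the recovery-error floor go beyond what the paper attempts and are consistent with its empirically observed gamma fit.
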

\begin{proof}
Since ${\|\mathbf{s}_i\|}_{1}=C_{\ell_{1}}$ and ${\|\mathbf{s}_i\|}_{2}=C_{\ell_{2}}$ for all $i$, we can conceive the same sequence $\{s_n\}$ of elements (absolute values) in $\mathbf{s}_i$ for all $i$. Then we have
\begin{equation}
\label{eq_pro1}
{\|\mathbf{s}_i-\mathbf{s}_{i}^{K}\|}_{1}=\sum_{n=1}^{N-K}s_n.
\end{equation}
Without loss of generality, we consider the partial sum $\sum_{n=1}^{N-K}s_n$ in (\ref{eq_pro1}) to be an \emph{arithmetic series} which can be represented by a quadratic function in terms of $K$. We also assume the inequality constraint in (\ref{eq_fourteen}) is the equality constraint such that ${\|\mathbf{s^{\star}}-\mathbf{s}\|}_{2}=C_{1}\cdot{\|\mathbf{s}-{\mathbf{s}}_{K}\|}_{1}$.

If we take the (partial) inverse function of the quadratic function, we have $K\sim K_{\mathrm{max}}-\sqrt{{\|\mathbf{s^{\star}}-\mathbf{s}\|}_{2}-(\min{\|\mathbf{s}_{ij}^{\star}-\mathbf{s}_i\|}_{2})}$. Assuming the distribution of ${\|\mathbf{s^{\star}}-\mathbf{s}\|}_{2}$ is \emph{symmetric} (zero skewness), this asymptotic relation says ${\|\mathbf{s^{\star}}-\mathbf{s}\|}_{2}$ will be \emph{compressed} as it becomes large, which in turn makes the pdf of $K$ right tailed.

A similar claim can be made if we consider the partial sum $\sum_{n=1}^{N-K}s_n$ to be a \emph{geometric series}, where $K\sim N-\log({\|\mathbf{s^{\star}}-\mathbf{s}\|}_{2})$. In this case, the pdf of $K$ is skewed to the right as well.\qed
\end{proof}

\subsection{Error Analysis in Dynamic Systems} \label{sec:dynamic}
Since the success or failure of signal recovery does not exist in this framework, we instead investigate the amount of error occurring during the recovery procedure in an \emph{expected value} sense. In particular, the best $K$-term approximation ${\|\mathbf{s}-{\mathbf{s}}_{K}\|}_{1}$ in (\ref{eq_six}) is known to be bounded as follows~\cite{baraniuk2010model}:
\begin{equation}
\label{eq_sixteen}
{\|\mathbf{s}-{\mathbf{s}}_{K}\|}_{1}\leq\frac{2G}{K},
\end{equation}
where the constant $G$ can be learned by the power-law decay such that each magnitude of components in $\mathbf{s}$, sorted in decreasing order, is upper bounded by $G/i^{2}$. ($i=1,\ldots,N$ is the sorted index.)

Then we can analyze the $\ell_{2}$ error $E$ of signal recovery assuming $f_{K}(k)=\mathrm{Gamma}(\kappa,\theta)$, which is given by
\begin{equation}
\label{eq_seventeen}
E=\int_{k}C_{1}\cdot\frac{2G}{k}f_{K}(k)\,\ud k=\frac{2C_{1}G}{\theta}\mathrm{B}(\kappa-1,1),
\end{equation}
where $\mathrm{B}(\cdot,\cdot)$ is the beta function~\cite{lee2015learning}. Here the pdf $f_{K}(k)$ is employed to represent varying $K$'s.\footnote{Note that this pdf is different from the one introduced in Section~\ref{sec:success}.}

In this framework, there is no longer such an indicator as the timely varying signal sparsity $K$ in Section~\ref{sec:success}, because signals are compressible and their coefficients are already populated with small, but non-zero, coefficients. Thus, we may assume the same gamma distribution over time, whose parameters $\kappa$ and $\theta$ can then be estimated.

In order to prevent overfitting to insufficient observations, we introduce the conjugate prior for the gamma distribution. It is known that the conjugate prior of the gamma distribution has the following form~\cite{miller1980bayesian,Fink97acompendium}.
\begin{equation}
\label{eq_eighteen}
\Prob(\kappa,\theta\mid p,q,r,s)=\frac{1}{Z}\cdot\frac{p^{\kappa-1}\exp(-q/\theta)}{\Gamma(\kappa)^{r}\theta^{s\kappa}},
\end{equation}
where $p$, $q$, $r$, and $s$ are hyperparameters which are sequentially updated with $p'=pk$, $q'=q+k$, $r'=r+1$, and $s'=s+1$, respectively\footnote{Here, $p'$, $q'$, $r'$, and $s'$ are updated posterior hyperparameters; $k$ is a single observation.}; and the normalizing constant $Z$ is
\begin{equation}
\label{eq_nineteen}
Z=\int_{0}^{\infty}\frac{p^{\kappa-1}\Gamma(s\kappa+1)}{\Gamma(\kappa)^{r}q^{s\kappa+1}}\,\ud\kappa.
\end{equation}

Using (\ref{eq_seventeen}) and (\ref{eq_eighteen}), we can marginalize over $\kappa$ and $\theta$ to estimate error $\widehat{E}$ as follows:
\begin{eqnarray}
\label{eq_twenty}
\widehat{E}&{}={}&\int_{\kappa}\int_{\theta}E\cdot\Prob(\kappa,\theta\mid p,q,r,s)\,\ud\theta\,\ud\kappa\nonumber\\
&{}={}&\frac{2C_{1}G}{Z}\int_{0}^{\infty}\frac{p^{\kappa-1}}{(\kappa-1)\Gamma(\kappa)^{r}}\int_{0}^{\infty}\frac{\exp(-q/\theta)}{\theta^{s\kappa+1}}\,\ud\theta\,\ud\kappa\nonumber\\
&{}={}&\frac{2C_{1}G}{Z}\int_{0}^{\infty}\frac{p^{\kappa-1}\Gamma(s\kappa)}{(\kappa-1)\Gamma(\kappa)^{r}q^{s\kappa}}\,\ud\kappa,
\end{eqnarray}
\setlength{\arraycolsep}{5pt}%
which can be computed numerically.

\section{Experimental Results} \label{sec:experimental}
\subsection{Recovery Success} \label{sec:experimentalrecovery}
In Section~\ref{sec:success}, we discussed the discrepancy between the failure probabilities in the CS framework and actual random sampling. In order to show this discrepancy, we \emph{artificially} generated signed spikes $\pm1$ at random locations in proportion to desired sparsities and \emph{densified} these spikes using $\mathbf{\Psi}$\footnote{We used DCT as the sparsifying basis $\mathbf{\Psi}$ throughout experiments.} to perform the random sampling.

For each signal sparsity $K$, the actual failure probability can be calculated for different recovery experiments. To this end, we adopted a standard optimization method (\emph{basis pursuit}) to solve the optimization problem in (\ref{eq_three})~\cite{chen1998atomic}. Specifically, the primal-dual algorithm based on the interior point method was employed to solve (\ref{eq_three})~\cite{boyd2004convex}.

Fig.~\ref{fig_two} shows that the actual failure probability of signal recovery with varying signal sparsity does not follow the failure probability given in the CS framework. The failure probability in (\ref{eq_nine}) cannot model the actual failure probability of signal recovery, regardless of the value chosen for constant $C$. This result confirms Lemma~\ref{lem:1} and Corollary~\ref{cor:1}.

\begin{figure}[!t]
\centering
\includegraphics[width=0.6\columnwidth]{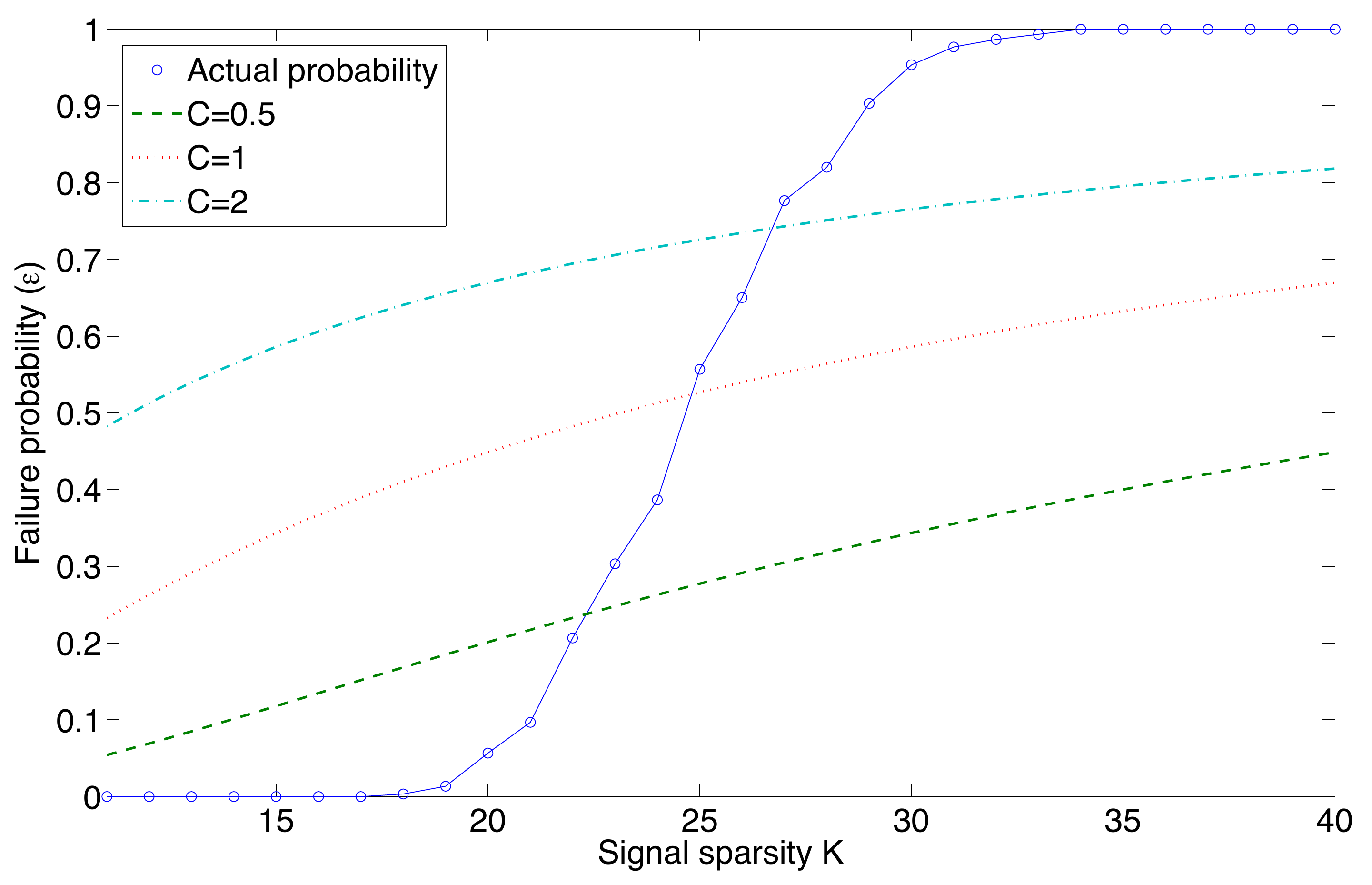}
\caption{Comparison between actual failure probability and failure probabilities given in (\ref{eq_nine}) with varying $C$'s. Actual failure probability of each signal sparsity $K$ was obtained with 300 experiments for $N=512$ and $M=100$.}
\label{fig_two}
\end{figure}

Moreover, in Section~\ref{sec:beta} we modeled the new pdf of signal recovery success $f_{\Pi}(\pi)$ in (\ref{eq_thirteen}). We compared this new pdf with the upper bound of failure in (\ref{eq_ten}), given a dynamic signal sparsity $K$. Specifically, we employed the inverse Gaussian distribution such that $f_{K}(k)=\mathrm{IG}(30,200)$. Fig.~\ref{fig_efficacy} exhibits the efficacy of our recovery success model, where the lower bounds of success probability given in the CS framework fail to capture actual success probability in random sampling case. This result confirms Theorem~\ref{thm:1}.

\begin{figure}[!t]
\centering
\includegraphics[width=0.6\columnwidth]{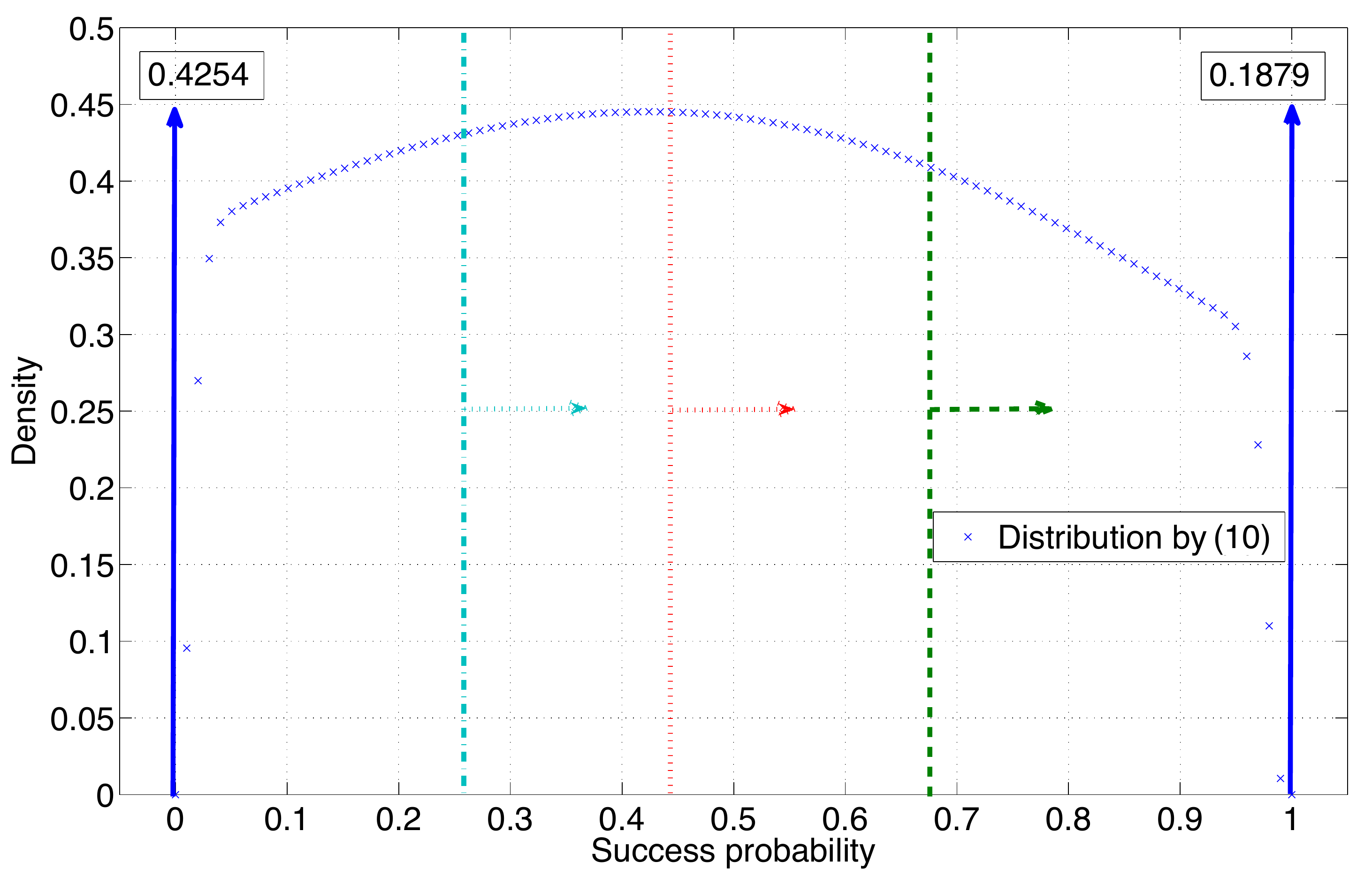}
\caption{Comparison between our new success probability distribution in (\ref{eq_thirteen}) and the lower bounds of success probability obtained by (\ref{eq_ten}) with varying $C$'s. The inverse Gaussian distribution was used for $f_{K}(k)$. Two probability masses are shown by vertical arrows, where solid boxes atop the arrows denote their probabilities. Three vertical dashed/dotted lines represent the lower bounds by (\ref{eq_ten}): $C=0.5$ at $0.6781$; $C=1$ at $0.4450$; and $C=2$ at $0.2596$. Here, $K_{\mathrm{min}}=20$ and $K_{\mathrm{max}}=30$.}
\label{fig_efficacy}
\end{figure}

Note that our recovery success model provides the baseline of recovery success for any CS frameworks that are specifically designed to handle varying signal sparsity. For instance, Fig.~\ref{fig_modified} shows histograms of success probability for various signal sparsities using Modified-CS~\cite{vaswani2010modified}.\footnote{Results were obtained with two frames where the second frame has one more spike than the first frame so that Modified-CS could exploit smoothly varying signal sparsity. Histograms in Fig.~\ref{fig_modified} are the success probability of the second frame.} Compared with Fig.~\ref{fig_three}, the success probability shown in Fig.~\ref{fig_modified} also follows the beta distribution; but success probability is higher than that of basis pursuit for a given sparsity $K$ ($K_{\mathrm{min}}=21$ and $K_{\mathrm{max}}=31$), thanks to the ability of Modified-CS to handle dynamic signal sparsity. The recovery success model in (\ref{eq_def1}) is still effective here for a theoretical framework, or the recovery success model using basis pursuit may promise a minimum guarantee for the recovery success of other CS frameworks.

\begin{figure}[!t]
\centering
\includegraphics[width=0.7\columnwidth]{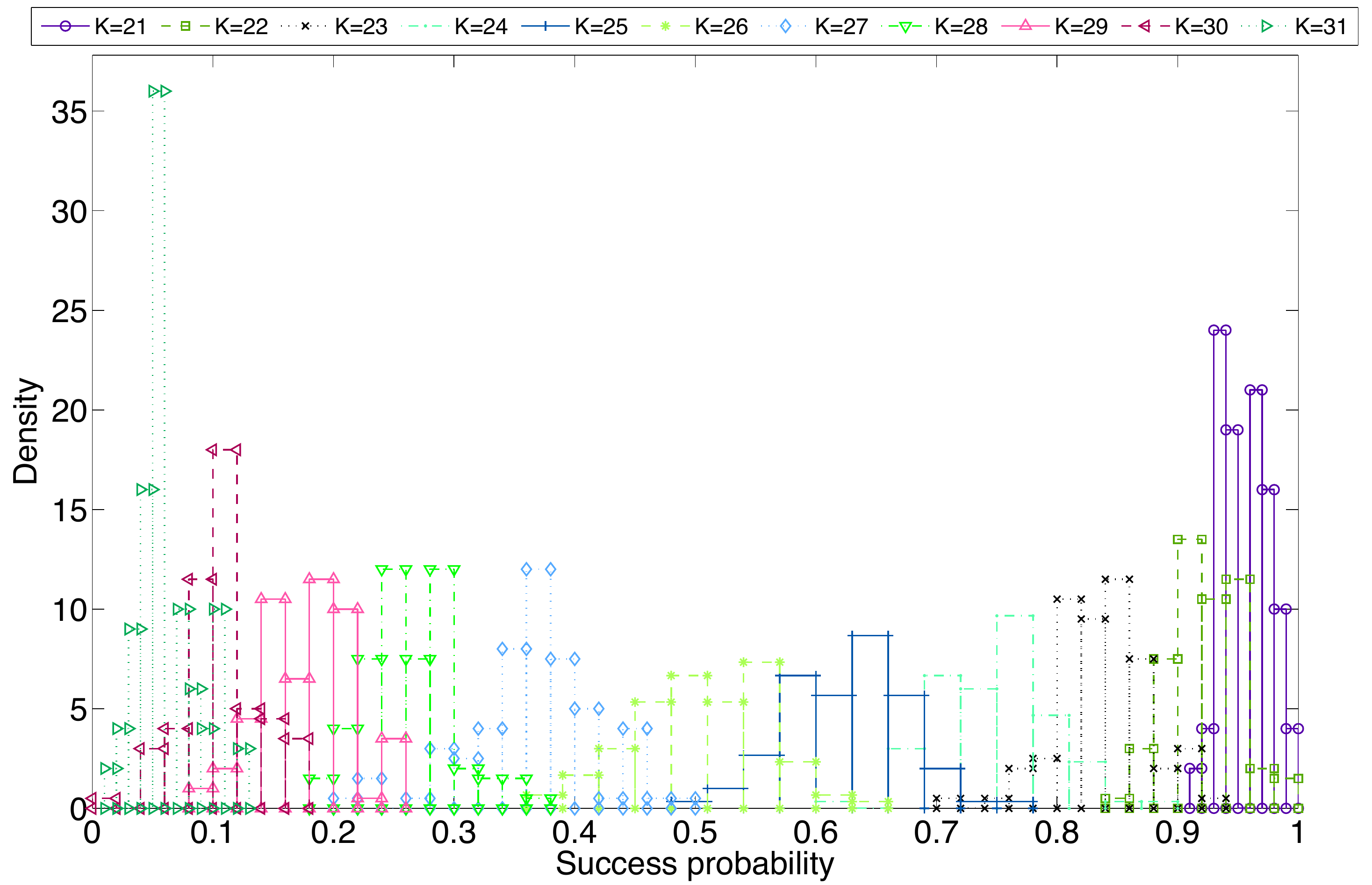}
\caption{Histograms of success probability for various $K$'s using Modified-CS~\cite{vaswani2010modified}. Success probability distribution for each $K$ was obtained with 100 different random signed spike vectors for $N=512$ and $M=100$. A single success probability for each signed spike vector was calculated with 100 experiments. Compared with the results with basis pursuit in Fig.~\ref{fig_three}, success probability is higher for a given sparsity $K$. Note also that 100 experiments resulted in higher variance for each $K$.}
\label{fig_modified}
\end{figure}

We also employed real-world environmental data sets obtained from wireless sensor network deployments~\cite{quer2012sensing}: humidity and temperature. In addition, audio data shown in Fig.~\ref{fig_one} was used for comparison as well. Random numbers representing the dynamic signal sparsity $K$ were drawn from the inverse Gaussian distribution ($f_{K}(k)=\mathrm{IG}(30,200)$) and we used this $K$ to randomly choose components sorted in decreasing order; other components were set to zero. Fig.~\ref{fig_experimental} displays the success probability of signal recovery follows the shape of Fig.~\ref{fig_efficacy}.

\begin{figure*}[!t]
\centering
\subfloat[humidity]{\includegraphics[width=0.33\linewidth]{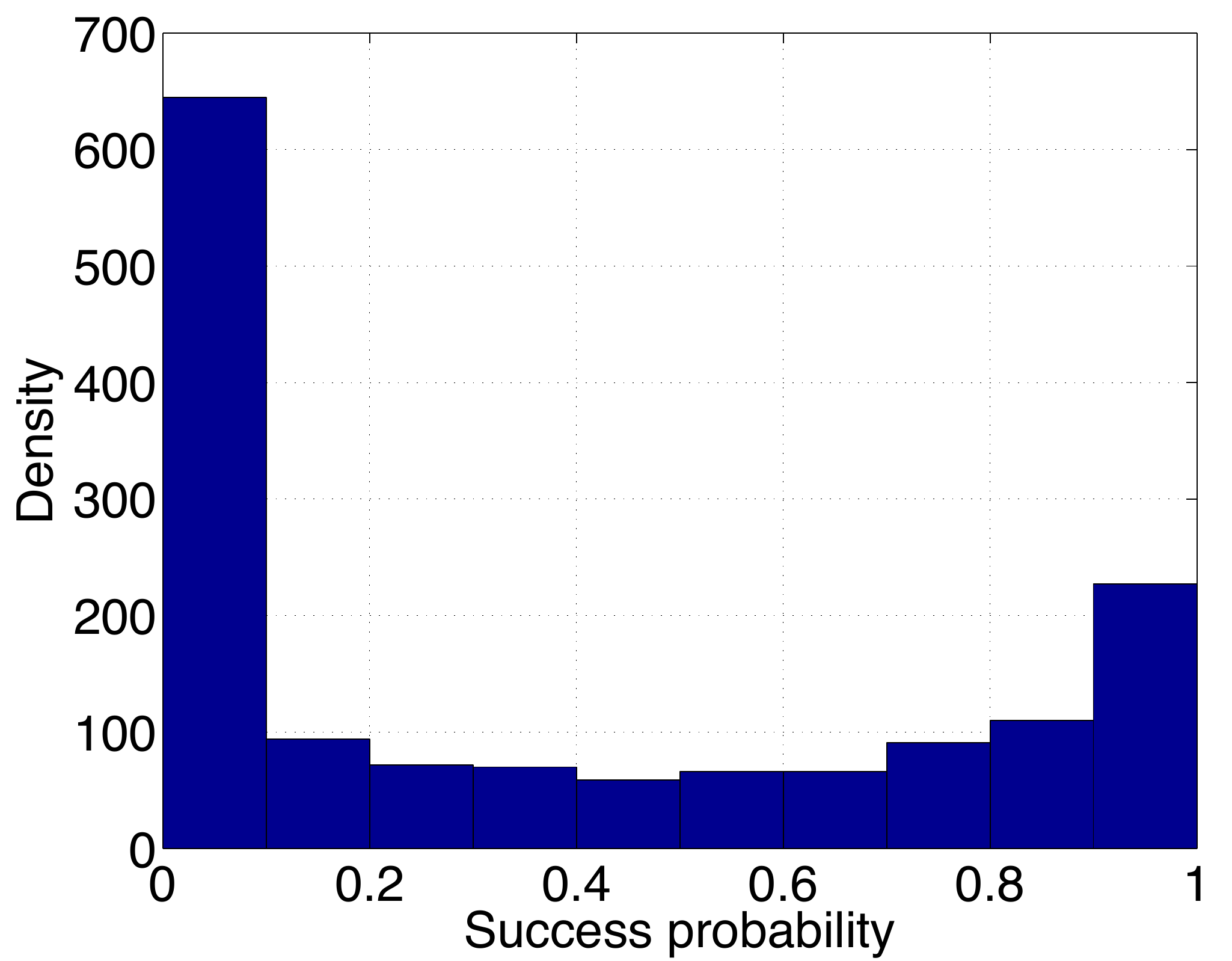}%
\label{fig_experimental_a}}
\hfil
\subfloat[temperature]{\includegraphics[width=0.33\linewidth]{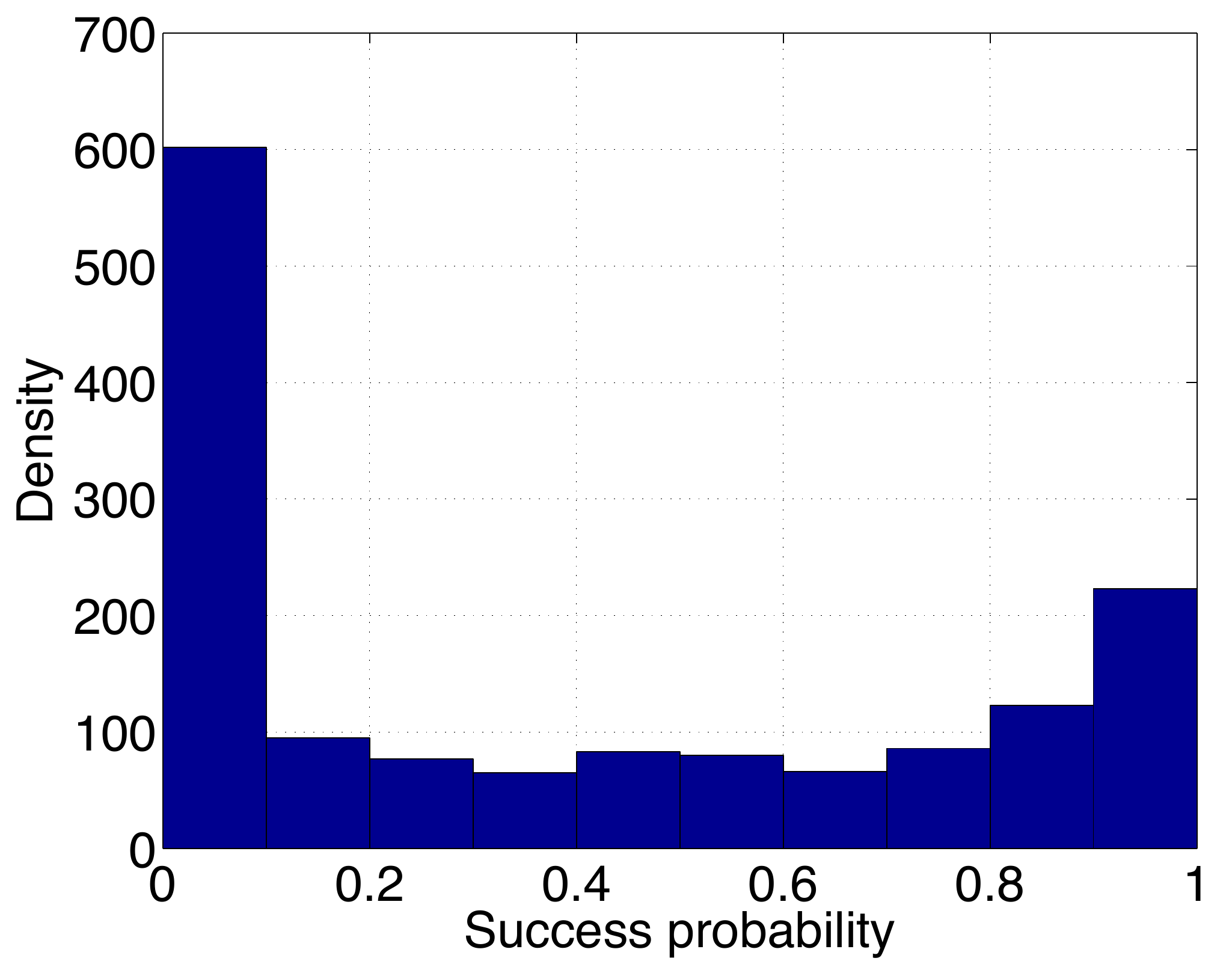}%
\label{fig_experimental_b}}
\hfil
\subfloat[audio]{\includegraphics[width=0.33\linewidth]{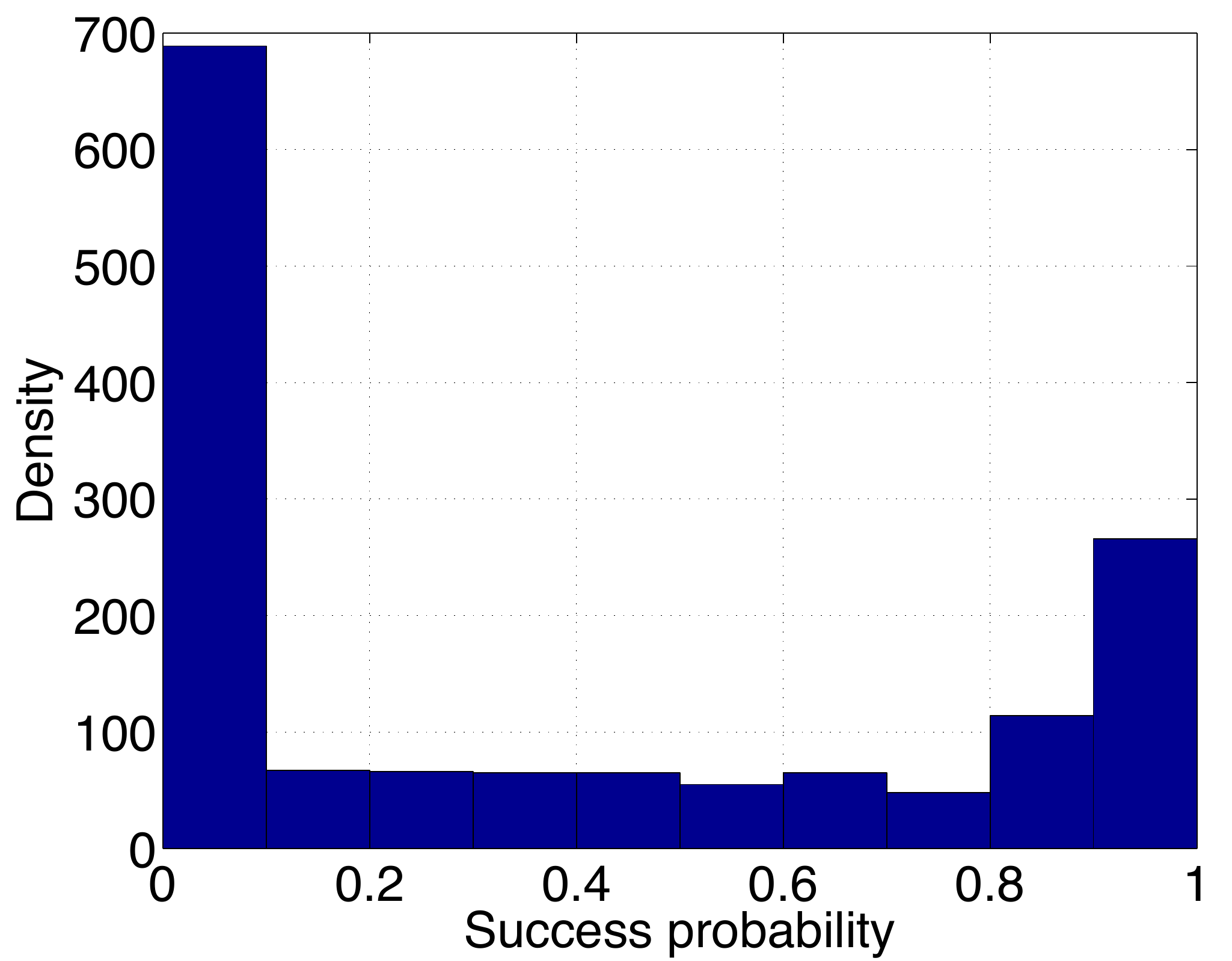}%
\label{fig_experimental_c}}
\caption{Histograms of success probability for (a) humidity data, (b) temperature data, and (c) audio data. Histogram was obtained with 1,500 random number generations (the inverse Gaussian distribution) to choose different signals and 100 different experiments for each signal with $N=512$ and $M=100$. Note that all histograms closely follow the shape of Fig.~\ref{fig_efficacy}.}
\label{fig_experimental}
\end{figure*}

\subsection{Recovery Quality}
When a signal is compressible and not exactly $K$-sparse, this signal is basically \emph{dense}. In Section~\ref{sec:quality}, we regarded the number of components included in the signal recovery as a varying quantity. We are interested in the general shape of this quantity in distribution. In order to verify Proposition~\ref{pro:1}, we performed experiments using real data sets as well as artificially generated random signed spikes.

We first provide results with real-world data sets to verify Proposition~\ref{pro:1}. Fig.~\ref{fig_gamma} displays the histograms of $K$, the number of components included in each signal recovery, which was obtained using the method explained in Section~\ref{sec:quality}. We can identify that Proposition~\ref{pro:1} actually holds here, as all distributions are skewed to the right. Furthermore, the distributions follow the gamma distribution, which is also natural since the gamma distribution has positive skewness, i.e., right tailed.

In addition, random signed spikes were artificially generated in different magnitudes at random locations and densified to perform random sampling. In particular, we considered an arithmetic sequence of length $50$ $(2,4,6,{\ldots},98,100)$, whose elements were placed at random locations in each vector. These signals are dense enough to be used for experiments because signal recovery always fails when $K>30$ in our case, as shown in Fig.~\ref{fig_two}. Fig.~\ref{fig_seven} displays the histogram of $K$ and the gamma distribution fitting, where we can again see that Proposition~\ref{pro:1} holds.

\begin{figure*}[!t]
\centering
\subfloat[humidity]{\includegraphics[width=0.33\linewidth]{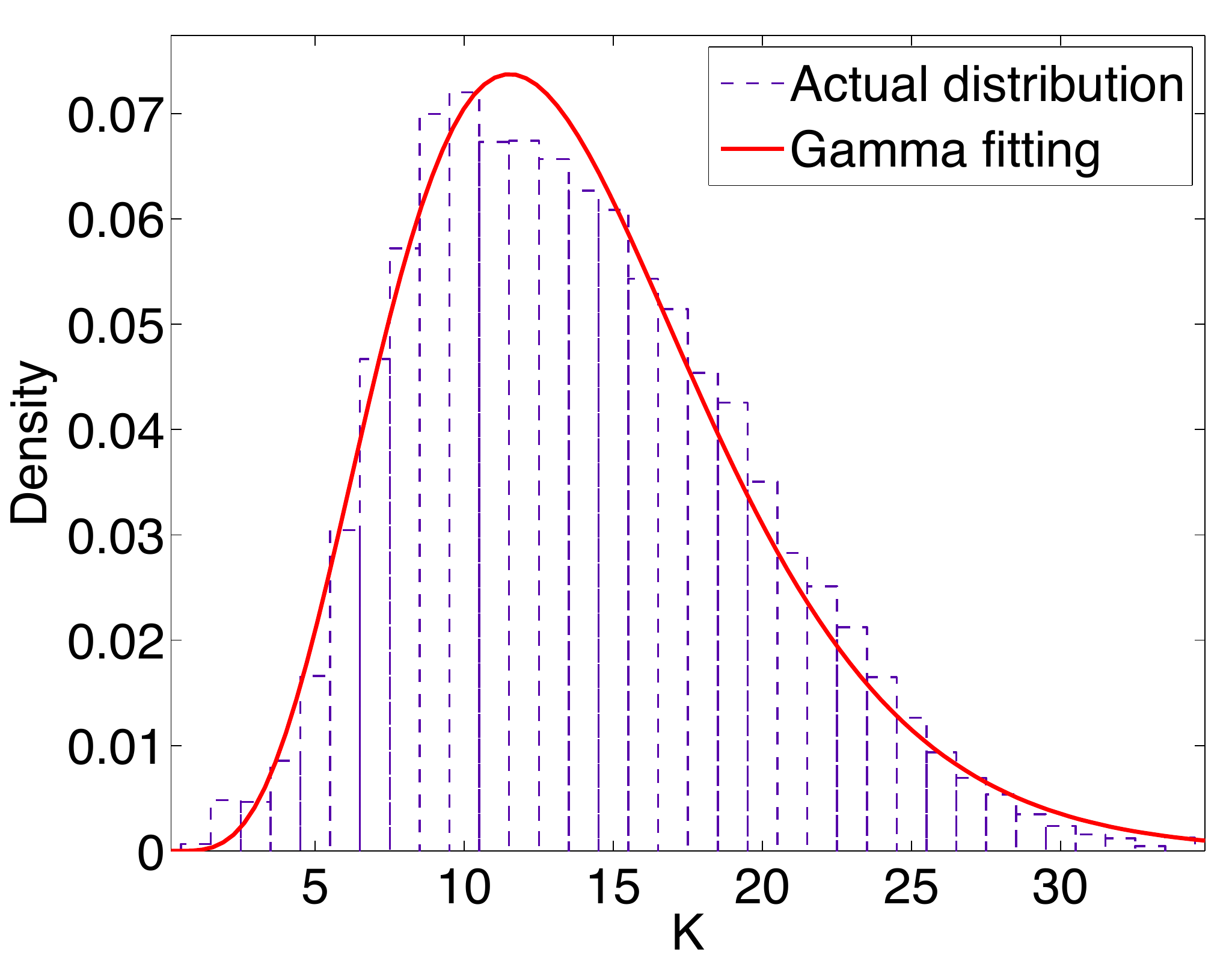}%
\label{fig_gamma_a}}
\hfil
\subfloat[temperature]{\includegraphics[width=0.33\linewidth]{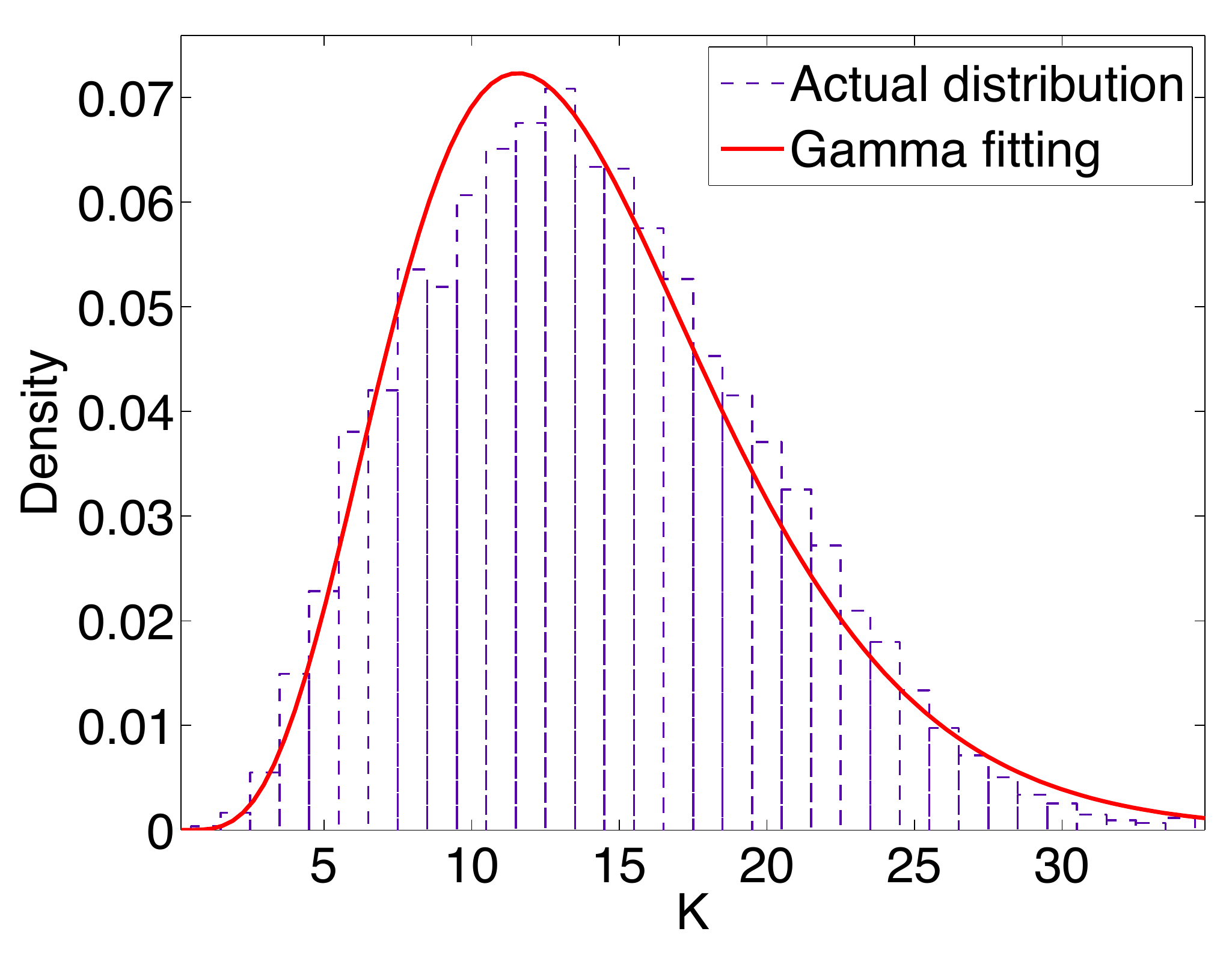}%
\label{fig_gamma_b}}
\hfil
\subfloat[audio]{\includegraphics[width=0.33\linewidth]{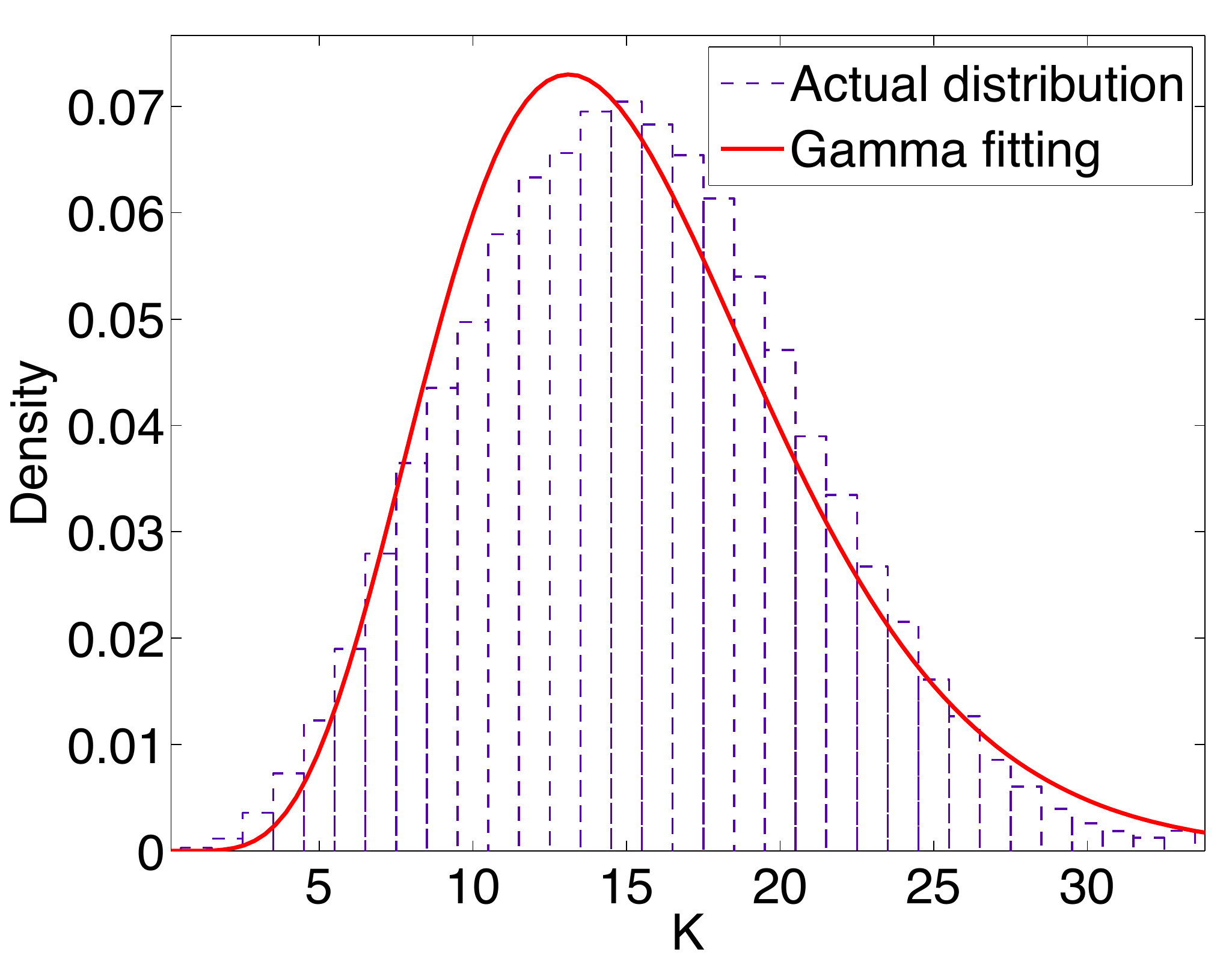}%
\label{fig_gamma_c}}
\caption{Distributions of $K$ fitted with gamma distributions for (a) humidity data ($\mathrm{Gamma}(5.69,2.45)$), (b) temperature data ($\mathrm{Gamma}(5.56,2.54)$), and (c) audio data ($\mathrm{Gamma}(6.92,2.21)$). Histograms were obtained with 34 different signals and 1,000 different experiments for each signal (a and b); with 153 different signals and 500 different experiments for each signal (c), with $N=512$ and $M=100$.}
\label{fig_gamma}
\end{figure*}

\begin{figure}[!t]
\centering
\includegraphics[width=0.6\columnwidth]{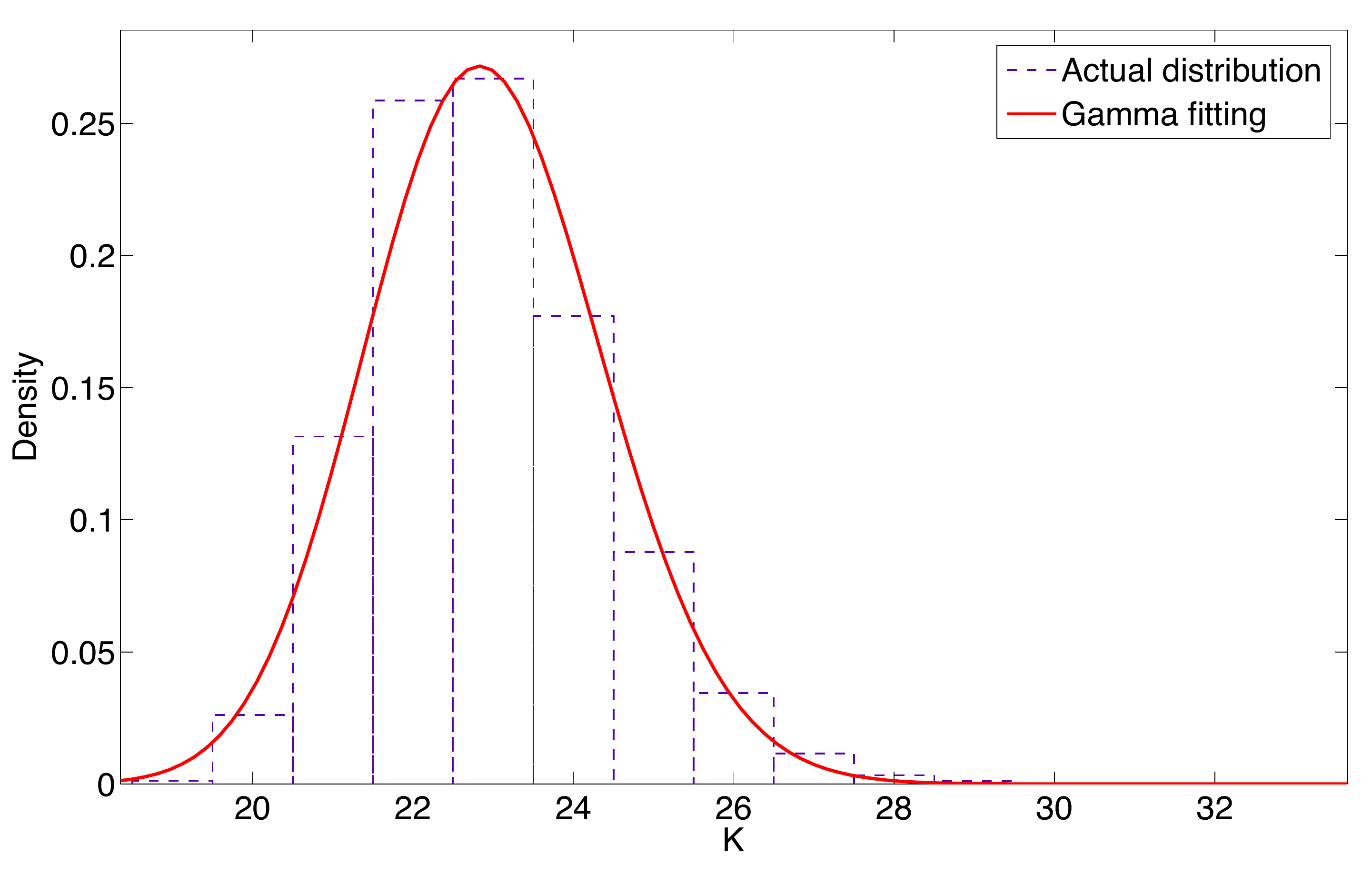}
\caption{Distribution of $K$ fitted with a gamma distribution $\mathrm{Gamma}(242.81,0.09)$, using the maximum likelihood estimation. Histogram was obtained with 300 different signals and 300 different experiments for each signal, with $N=512$ and $M=100$.}
\label{fig_seven}
\end{figure}

Furthermore, we analyze the $\ell_{2}$ error $E$ of signal recovery assuming $f_{K}(k)=\mathrm{Gamma}(\kappa,\theta)$ using (\ref{eq_seventeen}). In order to show its efficacy, we compared the solutions of (\ref{eq_seventeen}) with real data sets. For humidity data, $E=94.3533$ while the average $\ell_{2}$ norm of data is $564.8585$; for temperature data, $E=75.5441$ while the average $\ell_{2}$ norm of data is $627.8038$; and for audio data, $E=5.0979$ while the average $\ell_{2}$ norm of data is $1.5866$. Apart from the case of audio data, (\ref{eq_seventeen}) provides useful estimators for the upper bound of amount of error during recovery. It should be noted that this bound is rather loose due to a large constant $G$ in (\ref{eq_sixteen}), which could be improved with a less conservative $G$.

\section{Conclusion} \label{sec:conclusion}
We have presented a new theoretical CS framework in random sampling which handles uncertainty in signal recovery from a new perspective. The success probability of signal recovery in random sampling was investigated when the signal sparsity can vary with an insufficient number of measurements. The success probability analysis in the existing CS framework was shown to be incapable of reflecting actual success probability by both theoretical analysis and experiments. On the contrary, our recovery success model could closely reflect actual success probability.

We also considered signals which cannot be exactly represented with sparse representations, where we could alternatively view the number of components included in the signal recovery as a varying quantity. This quantity was shown by both theoretical analysis and experiments to follow a right-tailed distribution such as the gamma distribution. We provided an error analysis for these signals.

% references section
% can use a bibliography generated by BibTeX as a .bbl file
% BibTeX documentation can be easily obtained at:
% http://www.ctan.org/tex-archive/biblio/bibtex/contrib/doc/
% The IEEEtran BibTeX style support page is at:
% http://www.michaelshell.org/tex/ieeetran/bibtex/
\bibliographystyle{IEEEtran}
% argument is your BibTeX string definitions and bibliography database(s)
\bibliography{DELEE_Improving_CS}
%
% <OR> manually copy in the resultant .bbl file
% set second argument of \begin to the number of references
% (used to reserve space for the reference number labels box)
% that's all folks

\end{document}